\newtheorem{thm}{Theorem}[section]
\newtheorem{proposition}[thm]{Proposition}
\newtheorem{theorem}[thm]{Theorem}
\theoremstyle{definition}
\newtheorem{definition}[thm]{Definition}
\newtheorem{remark}[thm]{Remark}
\DeclareMathOperator{\rank}{rank}
\DeclareMathOperator{\id}{id}
\newcommand{\dmp}{k}
\newcommand{\Mfd}{\mathbf{M}}
\newcommand{\xc}{\mathbf{x}}
\newcommand{\cl}{\colon}
\title{Yang--Baxter maps of KdV, NLS and DNLS type on division rings}
\date{}
\author{S. Konstantinou-Rizos\thanks{skonstantin84@gmail.com} ~and A.~A. Nikitina\thanks{anasta\_niki@mail.ru}}
\affil{Centre of Integrable Systems, P.G. Demidov Yaroslavl State University, Yaroslavl, Russia}
\patchcmd{\subequations}{\alph{equation}}{\alphalph{\value{equation}}}{}{}
\begin{document}

\maketitle

\begin{abstract}
We construct nocommutative set-theoretical solutions to the Yang--Baxter equation related to the KdV, the NLS and the derivative NLS equations. In particular, we construct several Yang--Baxter maps of KdV type and we show that one of them is completely integrable in the Liouville sense. Then, we construct a noncommutative KdV type Yang--Baxter map which can be squeezed down to the noncommutative discrete potential KdV equation. Moreover, we construct Darboux transformations for the noncommutative derivative NLS equation. Finally, we consider matrix refactorisation problems for noncommutative Darboux matrices associated with the NLS and the derivative NLS equation and we construct noncommutative maps. We prove that the latter are solutions to the Yang--Baxter equation.
\end{abstract}

\bigskip


\begin{quotation}
\noindent{\bf PACS numbers:}
02.30.Ik, 02.90.+p, 03.65.Fd.
\end{quotation}

\begin{quotation}
\noindent{\bf Mathematics Subject Classification 2020:}
35Q55, 16T25.
\end{quotation}

\begin{quotation}
\noindent{\bf Mathematics Subject Classification 2020:}
35Q55, 16T25.
\end{quotation}

\begin{quotation}
\noindent{\bf Keywords:} Noncommutative Yang--Baxter maps, noncommutative Darboux transformations, noncommutative B\"acklund transformations, NLS type Yang--Baxter maps, KdV type Yang--Baxter maps.
\end{quotation}

\section{Introduction}\label{intro}
The Korteweg--de Vries (KdV) and the Nonlinear Schr\"odinger (NLS) equation are undoubtedly the most celebrated equations of Mathematical Physics. They are related to many aspects of integrability, namely they possess Lax pairs, Darboux and B\"acklund tranformations, they are solvable by the inverse scattering transformation and they admit soliton solutions. 

On the other hand, the Yang--Baxter equation has been very actively studied over the past few decades. It has applications in many diverse fields of Mathematics and Physics, ranging from statistical and quantum mechanics to topology and representation theory, and has earned its place in the list of most fundamental equations of Mathematical Physics. However, the most important application of the Yang--Baxter equation is its relation to the theory of Integrable Systems.

Yang--Baxter maps are set-theoretical solutions to the Yang--Baxter equation. Some applications of Yang--Baxter maps in the theory of integrable systems include their strict relation to  integrable systems of difference equations via the symmetries of the latter \cite{Pap-Tongas-Veselov}, their connection with integrable equations of mathematical physics \cite{Goncharenko-Veselov} and their association with nonlinear integrable PDEs via Darboux transformations \cite{Sokor-Sasha, GKM, MPW}. For particular applications of Yang--Baxter maps in relation with the KdV and the NLS equation, we indicatively refer to the derivation of the discrete potential KdV equation from a particular Yang--Baxter map \cite{Kouloukas}, the appearance of Yang--Baxter maps as soliton solutions of the matrix KdV equation \cite{Goncharenko-Veselov}, and the contruction of Yang--Baxter maps via Darboux transformations of the NLS and the derivative NLS equation \cite{Sokor-Sasha}.

The need to study the noncommutative solutions of the Yang--Baxter equation comes naturally from the continuously increasing popularity of noncommutative integrable systems, which have been in the centre of interest for many researchers in Mathematical Physics (see, e.g., \cite{Bobenko-Suris, Dimakis-Hoissen, Dimakis-Hoissen-2015, Doliwa-Noumi, Kupershmidt, Nijhoff-Capel, Nimmo, Talalaev}). Indeed, noncommutative solutions to the  Yang--Baxter equation have been  studied by leading scientists in the area of Integrable Systems (see, e.g., \cite{Doliwa-2014, Kassotakis-Kouloukas} and the references therein).

This paper aims to extend some of the results of papers \cite{Kouloukas}, \cite{Sokor-Kouloukas} and \cite{Sokor-Sasha}. Specifically, we shall adopt the approach of correspondences to construct Yang--Baxter maps from the same generator used to construct a KdV type of Yang--Baxter map in \cite{Kouloukas}, we shall construct a fully noncommutative analogue of the KdV lift Yang--Baxter map, which first appeared in \cite{Kouloukas} and was extended to an anticommutative (Grassmann) Yang--Baxter map. Moreover, we shall construct noncommutative analogues of the NLS and the derivative NLS Yang--Baxter maps which appeared in \cite{Sokor-Sasha} using noncommutative versions of Darboux transformations for NLS type equations.

In particular, what is new in this paper:
\begin{itemize}
    \item We demonstrate that the approach of correspondences \cite{Igonin-Sokor} gives rise to a wider class of Yang--Baxter maps, using a KdV type Yang--Baxter map as an illustrative example;
    \item We construct new KdV type Yang--Baxter maps, and we prove that one of them is completely integrable in the sense of Liouville;
    \item We construct a noncommutative version of the KdV type Yang--Baxter map \cite{Kouloukas} which can be squeezed down to the noncommutative discrete potential KdV equation;
    \item We construct a noncommutative version of the Darboux transformation for the derivative NLS equation \cite{SPS};
    \item We construct noncommutative versions of the NLS and derivative NLS type Yang--Baxter maps which appeared in \cite{Sokor-Sasha}.
\end{itemize}

The rest of the text is organised as follows. In the next section we give all the preliminary definitions in order for the text to be self-contained. In particular, we fix the notation we use throughout the text, we give the definitions of Yang--Baxter maps, parametric Yang--Baxter maps and their Lax representation. Additionally, we explain the relation between Yang--Baxter maps and matrix refactorisation problems and we give the definition of a completely integrable map in the Liouville sense. In Section \ref{KdV-type maps}, employing the approach  of correspondences, we construct  two novel Yang--Baxter maps of KdV type, and we show  that one of them is completely integrable in the Liouville sense. Furthermore, we  construct a noncommutative version of  a the ``KdV lift,'' which first appeared in \cite{Kouloukas}, and we  show that it satisfies the  Yang--Baxter equation.  In section \ref{NLS type  maps}, we first  construct a Darboux--B\"acklund transformation  related to the noncommutative derivative NLS equation. Then,  we employ  Darboux matrices of  NLS and  derivative NLS type  to construct noncommutative parametric Yang--Baxter maps. Finally, in section  \ref{conclusions}, we close with some concluding remarks  and present possible  directions for  further  research.

\section{Preliminaries}
\subsection{Notation}
Throughout the text:
\begin{itemize}
    \item By $\mathcal{X}$ we denote an arbitrary set, whereas by Latin italic letters (i.e. $x, y, u, v$ etc.) the elements of $\mathcal{X}$, with an exception of the `spectral parameter' which is denoted by the Greek letter $\lambda$. Moreover, by $\mathcal{X}^n$ we denote the Cartesian product of  $\mathcal{X}$ times itself $n$ times, i.e.
$\mathcal{X}^n=\underbrace{\mathcal{X}\times \mathcal{X}\times\dots\times \mathcal{X}}_{n}$.
    
    \item By $\mathfrak{R}$ we denote a noncommutative division ring, and its elements are denoted by bold italic Latin letters (i.e. $\bm{x}, \bm{y}, \bm{u}$ etc.). That is, $\mathfrak{R}$ is an associative algebra with multiplicative identity $1$ where commutativity with respect to mutliplication is not assumed ($\bm{x}\bm{y}\neq \bm{y}\bm{x}$), and every nonzero element $\bm{x}$ has an inverse $\bm{x}^{-1}$, i.e. $\bm{x}\bm{x}^{-1}=\bm{x}^{-1}\bm{x}=1$. The centre of a division ring will be denoted by $Z(\mathfrak{R})=\{a\in\mathfrak{R}:\forall\bm{x}\in\mathfrak{R},a\bm{x}=\bm{x}a\}$.
    
    \item Matrices will be denoted by capital Roman straight letters (i.e. ${\rm A}, {\rm B}, {\rm C}$) etc. Additionally, matrix operators are denoted by capital caligraphic letters (for instance, $\mathcal{L}=D_x-\rm{U}$).
\end{itemize}

\subsection{Set-theoretical Yang--Baxter equation}
Let $\mathcal{X}$ be a set. A map $Y:\mathcal{X}^2\rightarrow \mathcal{X}^2$ is called a \textit{Yang--Baxter map} if it satisfies the Yang--Baxter equation
\begin{equation}\label{eq_YB}
Y^{12}\circ Y^{13}\circ Y^{23}=Y^{23}\circ Y^{13}\circ Y^{12}.
\end{equation}
The terms $Y^{12}$, $Y^{13}$, $Y^{23}$ in \eqref{eq_YB} are maps $\mathcal{X}^3\to \mathcal{X}^3$ defined as follows 
$$
Y^{12}(x,y,z)=\big(u(x,y),v(x,y),z\big),~~~
Y^{23}(x,y,z)=\big(x,u(y,z),v(y,z)\big),~~~
Y^{13}(x,y,z)=\big(u(x,z),y,v(x,z)\big),
$$
where $x,y,z\in \mathcal{X}$.

Now, let $\mathcal{S}$ and $\mathcal{X}$ be sets. 
A \emph{parametric Yang--Baxter map} $Y_{a,b}$ is a family of maps
\begin{gather}
\label{ParamYB}
Y_{a,b}\colon \mathcal{X}\times \mathcal{X}\to \mathcal{X}\times \mathcal{X},\qquad
Y_{a,b}(x,y)=\big(u_{a,b}(x,y),\,v_{a,b}(x,y)\big),\quad x,y\in \mathcal{X},
\quad a,b\in\mathcal{S},
\end{gather}
depending on parameters $a,b\in\mathcal{S}$ 
and satisfying the parametric Yang--Baxter equation
\begin{gather}
\label{pybeq}
Y^{12}_{a,b}\circ Y^{13}_{a,c} \circ Y^{23}_{b,c}=
Y^{23}_{b,c}\circ Y^{13}_{a,c} \circ Y^{12}_{a,b}\quad\text{for all }\,a,b,c\in\mathcal{S}.
\end{gather}
Probably the most popular example of parametric Yang--Baxter map is the Adler map \cite{Adler}
$$
Y:(x,y)\rightarrow \left(y-\frac{a-b}{x+y}, x+\frac{a-b}{x+y}\right)
$$
which can be obtained from the discrete potential KdV equation
\begin{equation}\label{pkdv}
(f_{11}-f)(f_{01}-f_{10})=a-b.
\end{equation}
after a change of variables \cite{Pap-Tongas-Veselov}.

\subsection{Lax representation of Yang--Baxter maps}
Let $Y_{a,b}$ be a parametric Yang--Baxter map as defined in \eqref{ParamYB}. Suppose that $u=u_{a,b}(x,y)$ and $v=v_{a,b}(x,y)$ obey the equation
\begin{gather}
\label{eq-Lax}
{\rm L}_a(u){\rm L}_b(v)={\rm L}_b(y){\rm L}_a(x)
\end{gather}
for all values of $x,y,a,b,\lambda$. 
Then, following \cite{Veselov2}, we say that ${\rm L}_a(x)={\rm L}(x;a,\lambda)$ 
is a \emph{Lax matrix} for the parametric map $Y_{a,b}$.
Moreover, the matrix refactorisation problem \eqref{eq-Lax} is called a Lax representation for $Y_{a,b}$. 

However, not every map satisfying \eqref{eq-Lax} is a parametric Yang--Baxter map. In particular, we have the following.

\begin{proposition} (Kouloukas--Papageorgiou \cite{Kouloukas})\label{KP}
Let $Y_{a,b}$ be a map with Lax representation \eqref{eq-Lax}. If the following \textit{matrix trifactorisation problem}
\begin{gather}
\label{trifac}
{\rm L}_a(\hat{x}){\rm L}_b(\hat{y}){\rm L}_c(\hat{z})={\rm L}_a(x){\rm L}_b(y){\rm L}_c(z),\quad\text{for all}~~a,b,c\in\mathcal{S}
\end{gather}
implies $\hat{x}=x$, $\hat{y}=y$, $\hat{z}=z$, then
$Y_{a,b}$ satisfies the parametric Yang--Baxter equation \eqref{ParamYB}.
\end{proposition}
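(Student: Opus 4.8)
The plan is to evaluate both sides of the parametric Yang--Baxter equation \eqref{pybeq} on an arbitrary triple $(x,y,z)\in\mathcal{X}^3$, to translate every application of a component map $Y^{ij}$ into a refactorisation of an adjacent pair inside a single product of three Lax matrices, and finally to invoke the uniqueness hypothesis \eqref{trifac} to identify the two outputs. Concretely, writing $\big(u_{a,b}(x,y),v_{a,b}(x,y)\big)$ for the components of $Y_{a,b}$, I would first unfold the left-hand side $Y^{12}_{a,b}\circ Y^{13}_{a,c}\circ Y^{23}_{b,c}$ as a chain of three elementary moves, introducing intermediate variables $(y_1,z_1)=Y_{b,c}(y,z)$, then $(x_2,z_2)=Y_{a,c}(x,z_1)$, then $(x_3,y_3)=Y_{a,b}(x_2,y_1)$, so that the left-hand side returns $(x_3,y_3,z_2)$. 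In the same way the right-hand side $Y^{23}_{b,c}\circ Y^{13}_{a,c}\circ Y^{12}_{a,b}$ produces $(x_2',y_3',z_3')$ through $(x_1',y_1')=Y_{a,b}(x,y)$, $(x_2',z_2')=Y_{a,c}(x_1',z)$ and $(y_3',z_3')=Y_{b,c}(y_1',z_2')$. The goal is then to establish $x_3=x_2'$, $y_3=y_3'$ and $z_2=z_3'$.

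The engine of the argument is the observation that the Lax relation \eqref{eq-Lax} lets each elementary move act as an adjacent transposition on the reverse-ordered product
\begin{gather*}
P_0={\rm L}_c(z){\rm L}_b(y){\rm L}_a(x).
\end{gather*}
Indeed, reading \eqref{eq-Lax} from right to left rewrites any adjacent pair ${\rm L}_b(\cdot){\rm L}_a(\cdot)$ in ``reversed'' parameter order as the sorted pair ${\rm L}_a(\cdot){\rm L}_b(\cdot)$, with the new arguments produced exactly by $Y_{a,b}$. Tracking the left-hand side, I would apply this to the factors carrying parameters $b,c$, then to those carrying $a,c$, then to those carrying $a,b$; at each step the two relevant factors are adjacent and appear in reversed order, so the move is legitimate, and the net effect is
\begin{gather*}
P_0={\rm L}_a(x_3){\rm L}_b(y_3){\rm L}_c(z_2).
\end{gather*}
Running the right-hand side through its three moves in the complementary order (parameters $a,b$, then $a,c$, then $b,c$) sorts the same $P_0$ as
\begin{gather*}
P_0={\rm L}_a(x_2'){\rm L}_b(y_3'){\rm L}_c(z_3').
\end{gather*}
These two reorderings are precisely the two reduced decompositions of the longest permutation of three strands, which is why both words sort $P_0$ into the same canonical parameter order.

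Comparing the two expressions for $P_0$ yields the trifactorisation identity ${\rm L}_a(x_3){\rm L}_b(y_3){\rm L}_c(z_2)={\rm L}_a(x_2'){\rm L}_b(y_3'){\rm L}_c(z_3')$, valid for all $a,b,c\in\mathcal{S}$, whence the hypothesis of the proposition forces $x_3=x_2'$, $y_3=y_3'$ and $z_2=z_3'$, that is, the parametric Yang--Baxter equation. The main obstacle, and the only place demanding genuine care, is the bookkeeping in the middle step: one must verify that at every stage the two factors to be refactorised are in fact adjacent in the current product and sit in the reversed parameter order required by \eqref{eq-Lax}, and that the arguments generated by the Lax relation coincide with the intermediate variables produced by the corresponding $Y^{ij}$. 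Noncommutativity of $\mathfrak{R}$ introduces no extra difficulty, since the entire computation merely rearranges a product of matrices using associativity together with the identity \eqref{eq-Lax}, and never commutes entries of $\mathfrak{R}$ past one another.
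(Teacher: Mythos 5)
Your proof is correct, and since the paper states Proposition \ref{KP} without proof (it is quoted from \cite{Kouloukas}), the natural benchmark is the standard argument from that reference, which your proposal reproduces faithfully: each application of a component map $Y^{ij}$ is realised, via \eqref{eq-Lax}, as an adjacent refactorisation inside the product ${\rm L}_c(z){\rm L}_b(y){\rm L}_a(x)$, the two sides of the Yang--Baxter equation yield two trifactorisations ${\rm L}_a(\cdot){\rm L}_b(\cdot){\rm L}_c(\cdot)$ of this same matrix, and the uniqueness hypothesis \eqref{trifac} identifies the factors. Your bookkeeping of the intermediate variables, of the adjacency and reversed parameter order of the factors at each of the three moves on either side, and of which slot each output occupies, is accurate, so the argument is complete.
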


\subsection{Complete integrability}
In the following definition we recall the standard notion of complete (Liouville) integrability for maps on manifolds
(see, e.g.,~\cite{fordy14, vesel1991} and references therein).
\begin{definition}
\label{dli}
Let $\dmp$ be a positive integer and 
 $\Mfd$ a $\dmp$-dimensional manifold 
with (local) coordinates $\xc_1,\dots,\xc_\dmp$.
A (smooth or analytic) map $F\cl \Mfd\to \Mfd$ is said to be \emph{Liouville integrable} 
(or \emph{completely integrable}) if 
one has the following objects on the manifold~$\Mfd$.
\begin{itemize}
	\item A Poisson bracket $\{\,,\,\}$ which is 
	invariant under the map~$F$ and is of constant rank~$2r$ 
	for some positive integer~$r\le\dmp/2$ (i.e. the $\dmp\times\dmp$ matrix with the entries 
	$\{\xc_i,\xc_j\}$ is of constant rank~$2r$).
	The invariance of the bracket means the following.
	For any functions $g$, $h$ on~$\Mfd$ one has $\{g,h\}\circ F=\{g\circ F,\,h\circ F\}$.
To prove that the bracket is invariant,
it is sufficient to check $\{g,h\}\circ F=\{g\circ F,\,h\circ F\}$ for $g=\xc_i$, $\,h=\xc_j$, $\,i,j=1,\dots,\dmp$.
	
\item If $2r<\dmp$, then one needs also $\dmp-2r$ functions $C_s$, $\,s=1,\dots,\dmp-2r$,
which are invariant under~$F$ (i.e. $C_s\circ F=C_s$) 
and are Casimir functions (i.e. $\{C_s,g\}=0$ for any function~$g$).
	\item One has $r$ functions $I_l$, $\,l=1,\dots,r$, which are invariant under~$F$
	and are in involution with respect to the Poisson bracket (i.e. $\{I_{{\rm l}_1},I_{l_2}\}=0$ 
	for all $l_1,l_2=1,\dots,r$).
	\item The functions $C_1,\dots,C_{\dmp-2r},I_1,\dots,I_r$  must be functionally independent.
\end{itemize}	
\end{definition}

\section{Yang--Baxter maps of KdV type}\label{KdV-type maps}
Most examples of Yang--Baxter maps in the literature --- including all the maps of the famous classification lists \cite{ABS-2004, Pap-Tongas-Veselov} --- possessing a Lax matrix satisfy the Lax equation \eqref{eq-Lax} uniquely. 

The first example of Yang--Baxter map with Lax operator that does not uniquely satisfy the Lax equation \eqref{eq-Lax} first appeared in \cite{Sokor-Kouloukas}. Motivated by that,  a new approach for correspondences giving rise to Zamolodchikov tetrahedron maps (which are higher-dimensional analogues of Yang--Baxter maps) was presented in in \cite{Igonin-Sokor}.

In this section, using a KdV type of Yang--Baxter map as an illustrative example, we demonstrate the benefits of the approach presented in \cite{Igonin-Sokor} in the case of Yang--Baxter maps. As a result, we construct new KdV type Yang--Baxter maps.

\subsection{Commutative Yang--Baxter maps of KdV type}
Let $\mathcal{X}=\mathbb{C}-\left\{0\right\}$ and ${\rm L}_a(x_1,x_2)$ be a matrix given by
\begin{equation}\label{YBlaxpKdV}
{\rm L}_a(x_1,x_2):=
\left(
\begin{matrix}
 x_1 & a+x_1x_2-\lambda \\
 -1 & -x_2
\end{matrix}\right), \quad x_1,x_2\in\mathcal{X},~~a,\lambda\in \mathbb{C},
\end{equation}
and consider the matrix refactorisation problem \eqref{eq-Lax}:
\begin{equation}\label{Lax-KdV}
    {\rm L}_a(u_1,u_2){\rm L}_b(v_1,v_2)={\rm L}_b(y_1,y_2){\rm L}_a(x_1,x_2).
\end{equation}

Equation \eqref{Lax-KdV} is equivalent to the following system of polynomial equations
\begin{align*}
    u_1-v_2&=y_1-x_2,\quad u_2-v_1=y_2-x_1, \quad a+u_1(u_2-v_1)=b+y_1(y_2-x_1),\\
     a+v_2(u_2-v_1)&=b+x_2(y_2-x_1),\quad u_1(b-u_2v_2+v_1v_2)-av_2=y_1 (a-y_2x_2+x_1x_2)-bx_2.
\end{align*}
The above can be solved for $u_1,u_2$ and $v_2$ in terms of $v_1$ to obtain:
\begin{subequations}\label{corr-KdV}
\begin{align}
    u_1&=y_1+\frac{a-b}{x_1-y_2},\\
    u_2&=v_1-x_1+y_2,\\
    v_2&=x_2+\frac{a-b}{x_1-y_2}.
\end{align}
\end{subequations}
That is, \eqref{corr-KdV} is not a unique solution of \eqref{Lax-KdV}, but it satisfies \eqref{Lax-KdV} for arbitrary $v_1\in\mathcal{X}$. In other words equation \eqref{Lax-KdV} implies a correspondence between $\mathcal{X}^4$ and $\mathcal{X}^4$, namely relations \eqref{corr-KdV}, rather than a map.

There are two ways to deal with this issue. First, one can introduce an auxiliary parameter $\epsilon$ in ${\rm L}_a(u_1,u_2)$ in \eqref{YBlaxpKdV} so that from the substitution of the corresponding matrix into \eqref{Lax-KdV} we would obtain a map. For example, in \cite{Kouloukas} the authors consider the matrix $\hat{\rm L}(x_1,x_2,a,\epsilon)=\begin{pmatrix}x_1-\epsilon\lambda & \frac{a+x_1x_2}{1-\epsilon (x_1-x_2)}-\lambda \\ -1+\epsilon (x_1-x_2) & -x_2-\epsilon \lambda\end{pmatrix}$, such that $\lim_{\epsilon\rightarrow 0}\hat{\rm L}(x_1,x_2,a,\epsilon)={\rm L}_a(x_1,x_2)$. Then, then the matrix refactorisation problem 
$$
 \hat{\rm L}_a(u_1,u_2,\epsilon)\hat{\rm L}_b(v_1,v_2,\epsilon)=\hat{\rm L}_b(y_1,y_2,\epsilon)\hat{\rm L}_a(x_1,x_2,\epsilon)
$$
is equivalent to a map which tends to the KdV type map \cite{Kouloukas}
\begin{equation}\label{KdV-lift}
    (x_1,x_2,y_1,y_2)\stackrel{Y_{a,b}}{\longrightarrow}\left(y_1+\frac{a-b}{x_1-y_2},y_2,x_1,x_2+\frac{a-b}{x_1-y_2}\right),
\end{equation}
as $\epsilon\rightarrow 0$. We note that matrix $\hat{\rm L}(x_1,x_2,a,\epsilon)$ in \cite{Kouloukas} was not randomly chosen, but follows naturally from certain Casimir functions.

Another way to obtain map \eqref{corr-KdV} is to adopt the correspondence approach which was presented in \cite{Igonin-Sokor} for constructing tetrahedron maps. In particular, employing \eqref{corr-KdV} as a generator of potential Yang--Baxter maps and the choosing the free parameter $v_1$ to be $v_1=x_1$, we obtain map \ref{KdV-lift}. 

We can construct more Yang--Baxter maps from \eqref{corr-KdV}. In particular, we have the following.

\begin{proposition}
    The following maps
    \begin{align}
        Y_1:(x_1,x_2,y_1,y_2)\longrightarrow\left(y_1+\frac{a-b}{x_1-y_2},y_2-\frac{a-b}{x_1-y_2},x_1-\frac{a-b}{x_1-y_2},x_2+\frac{a-b}{x_1-y_2}\right),\label{KdV-1}\\
        Y_2:(x_1,x_2,y_1,y_2)\longrightarrow\left(y_1+\frac{a-b}{x_1-y_2},y_2-\frac{a-b}{x_2-y_1},x_1-\frac{a-b}{x_2-y_1},x_2+\frac{a-b}{x_1-y_2}\right),\label{KdV-2}
    \end{align}
    are noninvolutive parametric Yang--Baxter maps. Moreover, maps $Y_1$ and $Y_2$ share the following functionally independent first integrals
    $$
    I_1=x_1-x_2+y_1-y_2,\quad I_2=(x_2-y_1)(x_1-y_2),
    $$
    and $Y_1$ has one more first integral:
    $$
    I_3=x_1+x_2+y_1+y_2.
    $$
\end{proposition}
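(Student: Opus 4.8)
The plan is to regard both $Y_1$ and $Y_2$ as particular sections of the correspondence \eqref{corr-KdV}: the choice $v_1=x_1-\tfrac{a-b}{x_1-y_2}$ reproduces \eqref{KdV-1}, while $v_1=x_1-\tfrac{a-b}{x_2-y_1}$ reproduces \eqref{KdV-2} (note that $u_1$ and $v_2$ are fixed by \eqref{corr-KdV} independently of the section, so the two maps differ only in their second and third components). Since \eqref{corr-KdV} solves the refactorisation problem \eqref{Lax-KdV} for \emph{every} value of the free variable $v_1$, each choice produces a genuine map satisfying the Lax representation \eqref{eq-Lax} with the matrix \eqref{YBlaxpKdV}. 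I would therefore obtain the Yang--Baxter property from Proposition~\ref{KP}: it suffices to show that the trifactorisation problem \eqref{trifac} for \eqref{YBlaxpKdV} has only the trivial solution. Because this is a property of the Lax matrix alone, a single verification settles both $Y_1$ and $Y_2$ at once.

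For the trifactorisation I would exploit the identity $\det{\rm L}_a(x_1,x_2)=a-\lambda$, which is independent of $x_1,x_2$, so each factor degenerates to rank one at a distinguished spectral value. Evaluating ${\rm L}_a(\hat x){\rm L}_b(\hat y){\rm L}_c(\hat z)={\rm L}_a(x){\rm L}_b(y){\rm L}_c(z)$ at $\lambda=a$ makes the leftmost factor rank one with column space spanned by $(x_1,-1)^{\mathsf T}$; equating column spaces forces $\hat x_1=x_1$. Symmetrically $\lambda=c$ fixes the row space on the right and gives $\hat z_2=z_2$, while $\lambda=b$ yields a rank-one decomposition $\mathbf p\,\mathbf q^{\mathsf T}$ tying together the surviving variables. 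I would then match the matrix coefficients of $\lambda^0$ and $\lambda^1$ (the $\lambda^2$ coefficient is the constant $\left(\begin{smallmatrix}0&-1\\0&0\end{smallmatrix}\right)$ and carries no information) and peel off $\hat x_2,\hat y_1,\hat y_2,\hat z_1$ one at a time, concluding $\hat x=x$, $\hat y=y$, $\hat z=z$. This is where essentially all the effort lies and is the main obstacle: the coefficient equations are nonlinear and coupled, and it is the elimination order dictated by the rank-drop data above that makes them tractable. As a purely computational fallback, one may instead substitute \eqref{KdV-1} and \eqref{KdV-2} directly into the parametric Yang--Baxter equation \eqref{pybeq} and verify the resulting identity on $\mathcal X^6$.

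The remaining assertions are routine. For noninvolutivity I would compute $Y_i\circ Y_i$: writing $p=\tfrac{a-b}{x_1-y_2}$, a second application of either map produces the new denominator $y_1-x_2$, so the obstruction to $Y_i\circ Y_i=\id$ is proportional to $a-b$ and is nonzero for generic parameters. Both maps share $I_2=(x_2-y_1)(x_1-y_2)$ for a structural reason: on each side of \eqref{eq-Lax} one has $\tr=2\lambda-a-b-I_2$ (in the input variables for ${\rm L}_b(y){\rm L}_a(x)$, and identically in the output variables for ${\rm L}_a(u){\rm L}_b(v)$), so equality of the two products forces $I_2\circ Y_i=I_2$ for every section of \eqref{corr-KdV}. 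The integrals $I_1=x_1-x_2+y_1-y_2$ and $I_3=x_1+x_2+y_1+y_2$ I would verify by substitution: for $Y_1$ the contributions of the single term $p$ telescope to zero in both, whereas for $Y_2$ the two distinct terms $p=\tfrac{a-b}{x_1-y_2}$ and $q=\tfrac{a-b}{x_2-y_1}$ cancel in $I_1$ but leave a residual $2(p-q)$ in $I_3$, explaining why $I_3$ survives only for $Y_1$. Finally, functional independence follows from a nonvanishing maximal minor of the Jacobian: the gradients $\nabla I_1=(1,-1,1,-1)$, $\nabla I_3=(1,1,1,1)$ and $\nabla I_2=(x_2-y_1,\,x_1-y_2,\,-(x_1-y_2),\,-(x_2-y_1))$ are linearly independent at generic points of $\mathcal X^4$.
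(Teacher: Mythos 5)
Your treatment of noninvolutivity, of the first integrals $I_1,I_2,I_3$, and of their functional independence is correct and essentially coincides with the paper's proof, which verifies these claims by substitution and by the rank of the matrix of gradients; your observation that $\tr\big({\rm L}_b(y){\rm L}_a(x)\big)=2\lambda-a-b-I_2$, so that $I_2$ is preserved by \emph{every} solution of \eqref{Lax-KdV}, is a nice structural remark the paper does not make.

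The route you propose for the Yang--Baxter property, however, rests on a step that is not merely difficult but false. Proposition \ref{KP} requires the trifactorisation problem \eqref{trifac} for the matrix \eqref{YBlaxpKdV} to admit only the trivial solution, whereas for this matrix uniqueness already fails for two factors with equal parameter order: expanding all four entries one checks that, for every $\epsilon$,
\begin{equation*}
{\rm L}_a(x_1,\,x_2+\epsilon)\,{\rm L}_b(y_1+\epsilon,\,y_2)\;=\;{\rm L}_a(x_1,x_2)\,{\rm L}_b(y_1,y_2),
\end{equation*}
because every term containing $\epsilon$ cancels. Multiplying on the right by ${\rm L}_c(z_1,z_2)$ yields a one-parameter family of nontrivial solutions $\hat{x}=(x_1,x_2+\epsilon)$, $\hat{y}=(y_1+\epsilon,y_2)$, $\hat{z}=(z_1,z_2)$ of \eqref{trifac}, and combining this with the analogous freedom in the pair $\big({\rm L}_b,{\rm L}_c\big)$ even gives a two-parameter family. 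Your rank-one evaluations at $\lambda=a$ and $\lambda=c$ do correctly force $\hat{x}_1=x_1$ and $\hat{z}_2=z_2$, but the remaining variables cannot all be ``peeled off'': two of them are genuinely undetermined. This non-uniqueness is precisely the phenomenon this section of the paper is devoted to (equation \eqref{Lax-KdV} defines a correspondence, not a map), and it makes Proposition \ref{KP} inapplicable to this Lax matrix.

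There is also a structural reason why your plan cannot be repaired within its own logic: it uses only the fact that $Y_1$ and $Y_2$ satisfy the Lax representation \eqref{eq-Lax}, so, if valid, it would prove that \emph{every} section of the correspondence \eqref{corr-KdV} is a Yang--Baxter map. That conclusion is false. For the section $v_1=0$, comparing the images of $(x,y,z)$ under the two sides of \eqref{pybeq}, the second entry of the first slot equals $z_2-y_1-\frac{a-c}{x_1-y_2-\frac{b-c}{y_1-z_2}}$ on the left and $z_2-y_1-\frac{a-b}{x_1-y_2}$ on the right, and these are different functions of $z_2$ whenever $b\neq c$. Hence the Yang--Baxter property is a feature of the particular sections \eqref{KdV-1} and \eqref{KdV-2}, not of the Lax matrix, and the only workable part of your argument is the sentence you relegate to a fallback: substituting $Y_1$ and $Y_2$ directly into the parametric Yang--Baxter equation \eqref{pybeq} and verifying the resulting identities separately for each map. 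That direct verification is exactly what the paper does.
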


\begin{proof}
    Maps $Y_1$ and $Y_2$ follow from the correspondence \eqref{corr-KdV} for $v_1=x_1-\frac{a-b}{x_1-y_2}$ and $v_1=x_1-\frac{a-b}{x_2-y_1}$, respectively. They are both nonivolutive since, for instance, $u_1\circ Y_1=x_1+(a-b)\left(\frac{1}{y_1-x_2}+\frac{1}{y_2-x_1}\right)\neq x_1$ and $u_1\circ Y_2=x_1+\frac{2(a-b)}{y_1-x_2}\neq x_1$. The Yang--Baxter property can be verified by straightforward substitution to the Yang--Baxter equation.

    Regarding the invariants, we have that $u_1-u_2+v_1-v_2\stackbin[\eqref{KdV-1}]{\eqref{KdV-2}}{=}x_1-x_2+y_1-y_2$, $(u_2-v_1)(u_1-v_2)\stackbin[\eqref{KdV-1}]{\eqref{KdV-2}}{=}(x_2-y_1)(x_1-y_2)$, and $u_1+u_2+v_1+v_2\stackrel{\eqref{KdV-1}}{=}x_1+x_2+y_1+y_2$. Now, if $r_i=\nabla I_i$, $i=1,2,3$, then the matrix ${\rm R}=\left[r_1,r_2,r_3\right]$ has rank 3. That is, $I_i$, $i=1,2,3$, are functionally independent. 
\end{proof}

Regarding the Liouville integrability of these maps, we have the following.

\begin{proposition}
    Map \eqref{KdV-1} is completely integrable.
\end{proposition}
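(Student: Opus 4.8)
The plan is to verify the four ingredients of Liouville integrability from Definition~\ref{dli} for the four-dimensional map $Y_1$ given by \eqref{KdV-1}. Here $\dmp=4$ and the manifold is $\Mfd=\mathcal{X}^4$ with coordinates $(x_1,x_2,y_1,y_2)$. Since we already possess three functionally independent first integrals $I_1,I_2,I_3$ from the previous proposition, the natural strategy is to find an invariant Poisson bracket of rank $2r=2$ (so $r=1$), use two of the three integrals as Casimir functions ($\dmp-2r=2$ of them), and retain the remaining one as the single integral $I_1$ in involution with itself (which is automatic). Thus the target is $r=1$: one Poisson bracket of constant rank $2$, two Casimirs, and one invariant function.

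First I would look for a Poisson structure whose symplectic leaves are cut out by two of the invariants. A convenient guess is to treat $I_1=x_1-x_2+y_1-y_2$ and $I_3=x_1+x_2+y_1+y_2$ as Casimirs, since these are the two ``linear'' invariants and the remaining functional freedom lives on a two-dimensional leaf where the rank-$2$ bracket is nondegenerate. Concretely, I would posit a constant (or simple) skew-symmetric matrix $\{\xc_i,\xc_j\}$ of rank $2$ annihilating the gradients $\nabla I_1$ and $\nabla I_3$, and then solve the linear conditions $\{C_s,g\}=0$ for $C_1=I_1$, $C_2=I_3$. Because $\nabla I_1=(1,-1,1,-1)$ and $\nabla I_3=(1,1,1,1)$ are constant, a constant-coefficient bracket of rank $2$ in the two-dimensional complement of $\mathrm{span}\{\nabla I_1,\nabla I_3\}$ should serve; the remaining integral in involution is then $I_2=(x_2-y_1)(x_1-y_2)$, and $\{I_2,I_2\}=0$ holds trivially by skew-symmetry.

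The second step is to verify invariance of the bracket under $Y_1$, i.e. $\{\xc_i,\xc_j\}\circ Y_1=\{\xc_i\circ Y_1,\,\xc_j\circ Y_1\}$ for all $i,j$, as Definition~\ref{dli} permits checking only on coordinate functions. Writing $\kappa=\tfrac{a-b}{x_1-y_2}$, the components of $Y_1$ are $u_1=y_1+\kappa$, $u_2=y_2-\kappa$, $v_1=x_1-\kappa$, $v_2=x_2+\kappa$, and one computes the Jacobian $DY_1$ and checks $DY_1\cdot J\cdot (DY_1)^{\mathsf T}=J\circ Y_1$ for the Poisson matrix $J$. I expect this to reduce to a handful of rational identities in $\kappa$ and the coordinates that simplify because of the special structure $u_i+v_j$ built into $Y_1$.

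The main obstacle I anticipate is the existence question for the invariant bracket rather than any single verification: one must produce a Poisson matrix $J$ (automatically Jacobi-closed if constant) that is simultaneously (i) of constant rank $2$, (ii) annihilated by the chosen Casimirs, and (iii) invariant under $Y_1$. Conditions (ii) and (iii) together are restrictive, and if no constant $J$ works I would allow $J$ to depend on the invariants $I_1,I_3$ (which is harmless since invariant functions pull back to themselves and the Jacobi identity is easier to maintain when the coefficients are Casimirs). Once $J$ is fixed, confirming that $\{I_2,g\}$ is not identically zero (so $I_2$ genuinely lies on the leaves and is not itself a Casimir) and that $C_1,C_2,I_2$ remain functionally independent — which follows from the rank-$3$ computation already established for $I_1,I_2,I_3$ — completes the four bullet points of Definition~\ref{dli}.
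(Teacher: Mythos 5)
Your proposal is correct and follows essentially the same route as the paper: exhibit a constant skew-symmetric Poisson matrix of rank $2$ invariant under $Y_1$, take two of the three known first integrals as Casimirs, and keep the third as the single ($r=1$, trivially involutive) integral, with functional independence already established. Moreover, your ansatz is essentially forced: up to an overall scalar, the only constant skew-symmetric matrix annihilating $\nabla I_1=(1,-1,1,-1)$ and $\nabla I_3=(1,1,1,1)$ is the one with $\{x_1,x_2\}=\{x_2,y_1\}=\{y_1,y_2\}=-1$, $\{x_1,y_2\}=1$ and all other brackets zero, and a direct computation with the Jacobian $M=DY_1$ gives $M\,{\rm P}\,M^{\mathsf T}={\rm P}$, so the invariance check you defer does go through; this is exactly the paper's bracket once its evident typo (the second occurrence of $\{x_1,x_2\}$) is read as $\{x_1,y_2\}=1$. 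The one point where you diverge from the paper is instructive: the paper designates $I_1$ and $I_2$ as the Casimirs, but this is a slip, since for the bracket just described one has $\nabla I_2\cdot{\rm P}\neq 0$; in fact the unique (up to scale) constant bracket annihilating $\nabla I_1$ and $\nabla I_2$ satisfies $M\,{\rm P}\,M^{\mathsf T}=-{\rm P}$, i.e. it is only anti-invariant under $Y_1$ and cannot serve in Definition \ref{dli}. Your assignment --- $I_1,I_3$ as Casimirs and $I_2$ as the remaining invariant --- is the internally consistent one, so your plan, carried out literally, reproduces the paper's proof in corrected form.
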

\begin{proof} Map \eqref{KdV-1} has three functionally independent first integrals $I_1=x_1-x_2+y_1-y_2$, $I_2=(x_2-y_1)(x_1-y_2)$ and $I_3=x_1+x_2+y_1+y_2$. The Poisson bracket defined by
$$
\{x_1,x_2\}=\{x_2,y_1\}=\{y_1,y_2\}=-1,\quad \{x_1,x_2\}=1, \quad\text{and all the rest are 0},
$$
is invariant under the map \eqref{KdV-1}. The associated Poisson matrix ${\rm P}$ is of rank 2, and $I_1$, $I_2$ are Casimir functions, since $\nabla I_1\cdot{\rm P}=\nabla I_2\cdot{\rm P}=0$. Therefore, map \eqref{KdV-1} is completely integrable.
\end{proof}

\subsection{A noncommutative Yang--Baxter map of KdV type}
Let $\mathfrak{R}$ be a noncommutative division ring and $Z(\mathfrak{R})$ its centre. Consider the following matrix
\begin{equation}\label{YBlaxpKdVNC}
{\rm L}_a(\bm{x}_1,\bm{x}_2):=
\left(
\begin{matrix}
 \bm{x}_1 & a+\bm{x}_1\bm{x}_2-\lambda \\
 -1 & -\bm{x}_2
\end{matrix}\right), \quad \bm{x}_1,\bm{x}_2\in\mathcal{R},~~a,\lambda\in Z(\mathfrak{R}),
\end{equation}
which is the noncommutative version of matrix \eqref{YBlaxpKdVNC}.

We substitute ${\rm L}_a(\bm{x}_1,\bm{x}_2)$ to the Lax equation
\begin{equation}\label{Lax-KdV-NC}
    {\rm L}_a(\bm{u}_1,\bm{u}_2){\rm L}_b(\bm{v}_1,\bm{v}_2)={\rm L}_b(\bm{y}_1,\bm{y}_2){\rm L}_a(\bm{x}_1,\bm{x}_2).
\end{equation}
The above matrix refactorisation problem is equivalent to the following correspondence between $\mathcal{R}^4$ and $\mathcal{R}^4$:
\begin{subequations}\label{KdV-corr-NC}
\begin{align}
    \bm{u}_1&=\bm{y}_1+(a-b)(\bm{x}_1-\bm{y}_2)^{-1},\\
    \bm{v}_1&=\bm{x}_1+\bm{u}_2-\bm{y}_2,\\
    \bm{v}_2&=\bm{x}_2+(a-b)(\bm{x}_1-\bm{y}_2)^{-1}.
\end{align}
\end{subequations}

The correspondence \eqref{KdV-corr-NC} does not define a Yang--Baxter map for arbitrary $\bm{u}_2$. However, for the choice $\bm{u}_2=\bm{y}_2$, \eqref{KdV-corr-NC} defines a Yang--Baxter map. In particular, we have the following.

\begin{theorem} (Noncommutative KdV lift) The  map $Y_{a,b}$ given by
\begin{subequations}\label{KdV-YB-NC}
\begin{align}
   \bm{x}_1\mapsto \bm{u}_1&=\bm{y}_1+(a-b)(\bm{x}_1- \bm{y}_2)^{-1},\label{KdV-YB-NC-a}\\
   \bm{x}_2\mapsto \bm{u}_2&=\bm{y}_2,\label{KdV-YB-NC-b}\\
    \bm{y}_1\mapsto\bm{v}_1&=\bm{x}_1,\label{KdV-YB-NC-c}\\
    \bm{y}_2\mapsto\bm{v}_2&=\bm{x}_2+(a-b)(\bm{x}_1-\bm{y}_2)^{-1},\label{KdV-YB-NC-d}
\end{align}
\end{subequations}
is a noninvolutive, parametric Yang--Baxter map. Furthermore,  map \eqref{KdV-YB-NC}  admits  the  first  integral $I=\bm{x}_1-\bm{x}_2+\bm{y}_1-\bm{y}_2$.
\end{theorem}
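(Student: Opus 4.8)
The plan is to establish three separate claims: that \eqref{KdV-YB-NC} is a genuine map arising from the refactorisation \eqref{Lax-KdV-NC}, that it satisfies the parametric Yang--Baxter equation, and that $I$ is invariant. The first claim is essentially bookkeeping: I would substitute the specialisation $\bm{u}_2=\bm{y}_2$ into the correspondence \eqref{KdV-corr-NC}, whereupon the second relation collapses to $\bm{v}_1=\bm{x}_1+\bm{y}_2-\bm{y}_2=\bm{x}_1$, recovering \eqref{KdV-YB-NC-c}, while \eqref{KdV-YB-NC-a} and \eqref{KdV-YB-NC-d} are inherited verbatim. Care is needed here because $\mathfrak{R}$ is noncommutative: the derivation of \eqref{KdV-corr-NC} from \eqref{Lax-KdV-NC} must keep all factors in order, and in particular the scalar $a-b\in Z(\mathfrak{R})$ commutes past everything, which is what makes the inverse $(\bm{x}_1-\bm{y}_2)^{-1}$ well defined and placed unambiguously. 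I would verify noninvolutivity exactly as in the commutative Proposition, by computing $\bm{u}_1\circ Y_{a,b}$ and exhibiting a nonzero correction term, so that $Y_{a,b}\circ Y_{a,b}\neq\id$.

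For the Yang--Baxter property I would invoke Proposition \ref{KP} rather than attempt a direct substitution into \eqref{pybeq}, since the noncommutative algebra makes a brute-force check of the two triple compositions forbidding. The strategy is to show that the matrix trifactorisation problem \eqref{trifac} for the Lax matrix \eqref{YBlaxpKdVNC} has only the trivial solution $\hat{\bm{x}}_i=\bm{x}_i$, $\hat{\bm{y}}_i=\bm{y}_i$, $\hat{\bm{z}}_i=\bm{z}_i$. Concretely, one expands ${\rm L}_a(\hat{\bm x}){\rm L}_b(\hat{\bm y}){\rm L}_c(\hat{\bm z})$ and ${\rm L}_a(\bm x){\rm L}_b(\bm y){\rm L}_c(\bm z)$ as $2\times2$ matrices over $\mathfrak{R}$, equates the four noncommutative entries, and solves the resulting system, tracking the $\lambda$-dependence: comparing coefficients of powers of the central parameter $\lambda$ decouples the system into simpler relations. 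The centrality of $a,b,c,\lambda$ is exactly what allows this coefficient-extraction argument to go through verbatim from the commutative case. One must confirm that uniqueness survives noncommutativity, i.e. that no ordering ambiguity introduces spurious extra solutions; I expect the $\lambda^1$ and $\lambda^0$ coefficient equations to force the conclusion.

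Finally, for the invariant $I=\bm{x}_1-\bm{x}_2+\bm{y}_1-\bm{y}_2$, I would simply compute $I\circ Y_{a,b}=\bm{u}_1-\bm{u}_2+\bm{v}_1-\bm{v}_2$ and substitute \eqref{KdV-YB-NC}. The two occurrences of the scalar $(a-b)(\bm{x}_1-\bm{y}_2)^{-1}$ enter with opposite signs (from $\bm{u}_1$ with $+$ and from $\bm{v}_2$ with $-$, after accounting for the sign of $\bm{v}_2$ in $-\bm{v}_2$), so they cancel, leaving $\bm{y}_1-\bm{y}_2+\bm{x}_1-\bm{x}_2=I$. Because this is a subtraction of identical ring elements, no commutativity is required and the cancellation is exact.

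The main obstacle is the trifactorisation step: verifying that \eqref{trifac} admits only the trivial solution over a noncommutative division ring, where I cannot rely on determinants or symmetric-function tricks and must instead extract the conclusion purely from equating the entrywise $\lambda$-coefficients while respecting the noncommutative product order. The other two steps are direct, order-sensitive but routine, substitutions.
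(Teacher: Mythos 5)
Your bookkeeping steps are fine: deriving \eqref{KdV-YB-NC} from the correspondence \eqref{KdV-corr-NC} by setting $\bm{u}_2=\bm{y}_2$, the noninvolutivity check via $\bm{u}_1\circ Y_{a,b}$, and the sign-cancellation argument for the first integral $I$ all match the paper. The genuine gap is your second step: Proposition \ref{KP} cannot be invoked here, because its hypothesis is \emph{false} for the Lax matrix \eqref{YBlaxpKdVNC}. The trifactorisation problem \eqref{trifac} for this matrix has nontrivial solutions: for every $\bm{t}\in\mathfrak{R}$ one checks directly (all cancellations respect the order of the factors) that
\begin{equation*}
{\rm L}_a(\bm{x}_1,\bm{x}_2+\bm{t})\,{\rm L}_b(\bm{y}_1+\bm{t},\bm{y}_2)={\rm L}_a(\bm{x}_1,\bm{x}_2)\,{\rm L}_b(\bm{y}_1,\bm{y}_2),
\end{equation*}
so multiplying on the right by ${\rm L}_c(\bm{z}_1,\bm{z}_2)$ gives a one-parameter family of solutions of \eqref{trifac} with $(\hat{\bm{x}},\hat{\bm{y}},\hat{\bm{z}})\neq(\bm{x},\bm{y},\bm{z})$. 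This failure has nothing to do with noncommutativity (it already occurs over $\mathbb{C}$) and is not cured by extracting coefficients of $\lambda$; it is the same non-uniqueness that makes \eqref{Lax-KdV-NC} define a correspondence with a free entry rather than a map, which is the central theme of this section. So the step you flag as the "main obstacle" is not merely hard — it is impossible as stated.

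What the paper does instead is prove a \emph{restricted} uniqueness statement. It applies the two compositions $Y^{23}_{b,c}\circ Y^{13}_{a,c}\circ Y^{12}_{a,b}$ and $Y^{12}_{a,b}\circ Y^{13}_{a,c}\circ Y^{23}_{b,c}$ step by step to the product ${\rm L}_c(\bm{z}_1,\bm{z}_2){\rm L}_b(\bm{y}_1,\bm{y}_2){\rm L}_a(\bm{x}_1,\bm{x}_2)$; each single application is a refactorisation and so preserves the product. Because of the transfer structure \eqref{KdV-YB-NC-b}--\eqref{KdV-YB-NC-c} ($\bm{x}_2\mapsto\bm{y}_2$, $\bm{y}_1\mapsto\bm{x}_1$), the two resulting triple products agree in several slots (the same $\bm{z}_2$ in the first factor, the same $\bm{x}_1$ in the third), leaving only four unknown entries on each side. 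Equating these two specific products yields system \eqref{NC_KdV-lift-corr}, which is then solved — using that a product of nonzero elements of a division ring is nonzero — to get successively $\tilde{\tilde{\bm{z}}}_2=\hat{\hat{\bm{z}}}_2$, $\tilde{\bm{x}}_1=\hat{\bm{x}}_1$, $\tilde{\bm{z}}_2=\hat{\bm{z}}_2$ and $\tilde{\tilde{\bm{x}}}_1=\hat{\hat{\bm{x}}}_1$. In other words, uniqueness is established only within the family of factorisations actually generated by the map's action, which is true, whereas the unrestricted trifactorisation uniqueness you propose to prove is false. To repair your proposal, replace "verify the hypothesis of Proposition \ref{KP}" by this relative-uniqueness argument.
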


\begin{proof}
    Map \eqref{KdV-YB-NC} follows from \eqref{KdV-corr-NC} for the choice $u_2=y_2$.

    For the proof of the Yang--Baxter property we adopt the following notation. We denote the actions of the right- and left-side of the Yang--Baxter equation on $(\bm{x},\bm{y},\bm{z})$ as follows:
         \begin{align*}
         Y^{12}_{a,b}(\bm{x},\bm{y},\bm{z})&=(\tilde{\bm{x}},\tilde{\bm{y}},\bm{z}), & &(\bm{x},\hat{\bm{y}},\hat{\bm{z}})=Y^{23}_{b,c}(\bm{x},\bm{y},\bm{z}),\\
         Y^{13}_{a,c} \circ Y^{12}_{a,b}(\bm{x},\bm{y},\bm{z})&=(\tilde{\tilde{\bm{x}}},\tilde{\bm{y}},\tilde{\bm{z}}), & &(\hat{\bm{x}},\hat{\bm{y}},\hat{\hat{\bm{z}}})=Y^{13}_{a,c} \circ Y^{23}_{b,c}(\bm{x},\bm{y},\bm{z}),\\
       Y^{23}_{b,c}\circ  Y^{13}_{a,c} \circ Y^{12}_{a,b}(\bm{x},\bm{y},\bm{z})&=(\tilde{\tilde{\bm{x}}},\tilde{\tilde{\bm{y}}},\tilde{\tilde{\bm{z}}}), & &(\hat{\hat{\bm{x}}},\hat{\hat{\bm{y}}},\hat{\hat{\bm{z}}})=Y^{12}_{a,b}\circ Y^{13}_{a,c} \circ Y^{23}_{b,c}(\bm{x},\bm{y},\bm{z}).
    \end{align*} 
    That is, the Yang--Baxter equation is satisfied if $(\tilde{\bm{x}},\tilde{\bm{y}},\tilde{\bm{z}})=(\hat{\bm{x}},\hat{\bm{y}},\hat{\bm{z}})$.

    Now, using the right-hand side of the Yang--Baxter equation, and taking into account that $\bm{x_2}\mapsto \bm{y_2}$ and $\bm{y}_1\mapsto \bm{x}_1$, we obtain
\begin{align}
&{\rm L}_c(\bm{z}_1,\bm{z}_2){\rm L}_b(\bm{y}_1,\bm{y}_2){\rm L}_a(\bm{x}_1,\bm{x}_2) = {\rm L}_c(\bm{z}_1,\bm{z}_2){\rm L}_a(\tilde{\bm{x}}_1,\bm{y}_2){\rm L}_b(\bm{x}_1,\tilde{\bm{y}}_2)= \nonumber\\
&{\rm L}_a(\tilde{\tilde{\bm{x}}}_1,\bm{z}_2){\rm L}_c(\tilde{\bm{x}}_1,\tilde{\bm{z}}_2){\rm L}_b(\bm{x}_1,\tilde{\bm{y}}_2)={\rm L}_a(\tilde{\tilde{\bm{x}}}_1, \bm{z}_2){\rm L}_b(\tilde{\bm{x}}_1,\tilde{\bm{z}}_2){\rm L}_c(\bm{x}_1,\tilde{\tilde{\bm{z}}}_2)\label{3-fac-KdV-left}
\end{align}
On the other hand, using the left-hand side of the Yang--Baxter equation, and taking into account that $\bm{x_2}\mapsto \bm{y_2}$ and $\bm{y}_1\mapsto \bm{x}_1$, we have that
\begin{align}
&{\rm L}_c(\bm{z}_1,\bm{z}_2){\rm L}_b(\bm{y}_1,\bm{y}_2){\rm L}_a(\bm{x}_1,\bm{x}_2) = {\rm L}_b(\hat{\bm{y}}_1,\bm{z}_2){\rm L}_c(\bm{y}_1, \hat{\bm{z}}_2){\rm L}_a(\bm{x}_1,\bm{x}_2) =\nonumber\\ 
&{\rm L}_b(\hat{\bm{y}}_1, \bm{z}_2){\rm L}_a(\hat{\bm{x}}_1, \hat{\bm{z}}_2){\rm L}_c(\bm{x}_1, \hat{\hat{\bm{z}}}_2) = {\rm L}_a(\hat{\hat{\bm{x}}}_1, \bm{z}_2,)L(\hat{\bm{x}}_1, \hat{\bm{z}}_2,b){\rm L}_c(\bm{x}_1,\hat{\hat{\bm{z}}}_2).\label{3-fac-KdV-right}
\end{align}

From relations \eqref{3-fac-KdV-left} and \eqref{3-fac-KdV-right} it follows that
$$
{\rm L}_a(\tilde{\tilde{\bm{x}}}_1, \bm{z}_2){\rm L}_b(\tilde{\bm{x}}_1,\tilde{\bm{z}}_2){\rm L}_c(\bm{x}_1,\tilde{\tilde{\bm{z}}}_2)= {\rm L}_a(\hat{\hat{\bm{x}}}_1, \bm{z}_2)L(\hat{\bm{x}}_1, \hat{\bm{z}}_2,b){\rm L}_c(\bm{x}_1,\hat{\hat{\bm{z}}}_2),
$$
which implies the system of equations
\begin{subequations}\label{NC_KdV-lift-corr}
\begin{align} 
   & \tilde{\tilde{\bm{x}}}_1 - \tilde{\bm{z}}_2 = \hat{\hat{\bm{x}}}_1 - \hat{\bm{z}}_2,\label{NC_KdV-lift-corr-a}\\
   & \tilde{\bm{x}}_1 - \tilde{\tilde{\bm{z}}}_2 = \hat{\bm{x}}_1 - \hat{\hat{\bm{z}}}_2,\label{NC_KdV-lift-corr-b}\\
    &\tilde{\bm{x}}_1 \bm{x}_1-\tilde{\bm{x}}_1\tilde{\bm{z}}_2 + \bm{z}_2 \tilde{\bm{z}}_2 = \hat{\bm{x}}_1 \bm{x}_1 -\hat{\bm{x}}_1\hat{\bm{z}}_2 + \bm{z}_2 \hat{\bm{z}}_2,\label{NC_KdV-lift-corr-c}\\
    &\tilde{\tilde{\bm{x}}}_1 \tilde{\bm{x}}_1 -  \tilde{\tilde{\bm{x}}}_1 \bm{z}_2 - \bm{x}_1 \tilde{\tilde{\bm{z}}}_2 - \tilde{\tilde{\bm{x}}}_1 \tilde{\tilde{\bm{z}}}_2 + \tilde{\bm{z}}_2 \tilde{\tilde{\bm{z}}}_2 = \hat{\hat{\bm{x}}}_1 \hat{\bm{x}}_1 -  \hat{\hat{\bm{x}}}_1 \bm{z}_2 - \bm{x}_1 \hat{\hat{\bm{z}}}_2 - \hat{\hat{\bm{x}}}_1 \hat{\hat{\bm{z}}}_2 + \hat{\bm{z}}_2 \hat{\hat{\bm{z}}}_2,\label{NC_KdV-lift-corr-d}\\
    &\tilde{\tilde{\bm{x}}}_1 \tilde{\bm{x}}_1 \bm{x}_1 - \tilde{\tilde{\bm{x}}}_1 \bm{z}_2 \bm{x}_1 - b \tilde{\tilde{\bm{x}}}_1 - \tilde{\tilde{\bm{x}}}_1 \tilde{\bm{x}}_1  \tilde{\bm{z}}_2 + a \tilde{\bm{z}}_2 + \tilde{\tilde{\bm{x}}}_1 \bm{z}_2 \tilde{\bm{z}}_2 =\nonumber\\ &\hat{\hat{\bm{x}}}_1 \hat{\bm{x}}_1 \bm{x}_1 - \hat{\hat{\bm{x}}}_1 \bm{z}_2 \bm{x}_1 - b \hat{\hat{\bm{x}}}_1 - \hat{\hat{\bm{x}}}_1 \hat{\bm{x}}_1  \hat{\bm{z}}_2 + a \hat{\bm{z}}_2 + \hat{\hat{\bm{x}}}_1 \bm{z}_2 \hat{\bm{z}}_2,\label{NC_KdV-lift-corr-e}\\
    &\tilde{\tilde{\bm{x}}}_1 \tilde{\bm{x}}_1 (c + \bm{x}_1 \tilde{\tilde{\bm{z}}}_2) - a(c+ \bm{x}_1 \tilde{\tilde{\bm{z}}}_2) - \tilde{\tilde{\bm{x}}}_1 \bm{z}_2 (c+ \bm{x}_1 \tilde{\tilde{\bm{z}}}_2) - \tilde{\tilde{\bm{x}}}_1 (b \tilde{\tilde{\bm{z}}}_2 + \tilde{\bm{x}}_1 \tilde{\bm{z}}_2 \tilde{\tilde{\bm{z}}}_2) + (a + \tilde{\tilde{\bm{x}}}_1 \bm{z}_2)\tilde{\bm{z}}_2 \tilde{{\tilde{\bm{z}}}}_2=\nonumber\\
    &\hat{\hat{\bm{x}}}_1 \hat{\bm{x}}_1 (c + \bm{x}_1 \hat{\hat{\bm{z}}}_2) - a(c+ \bm{x}_1 \hat{\hat{\bm{z}}}_2) - \hat{\hat{\bm{x}}}_1 \bm{z}_2 (c+ \bm{x}_1 \hat{\hat{\bm{z}}}_2)- \hat{\hat{\bm{x}}}_1 (b \hat{\hat{\bm{z}}}_2 + \hat{\bm{x}}_1 \hat{\bm{z}}_2 \hat{\hat{\bm{z}}}_2) + (a + \hat{\hat{\bm{x}}}_1 \bm{z}_2)\hat{\bm{z}}_2 \hat{\hat{\bm{z}}}_2,\label{NC_KdV-lift-corr-f}\\
    &c \tilde{\bm{x}}_1 + \tilde{\bm{x}}_1 \bm{x}_1 \tilde{\tilde{\bm{z}}}_2 - \bm{z}_2 \bm{x}_1 \tilde{\tilde{\bm{z}}}_2 - b \tilde{\tilde{\bm{z}}}_2 - \tilde{\bm{x}}_1 \tilde{\bm{z}}_2 \tilde{\tilde{\bm{z}}}_2 + \bm{z}_2 \tilde{\bm{z}}_2 \tilde{\tilde{\bm{z}}}_2 = \nonumber\\
    & c \hat{\bm{x}}_1 + \hat{\bm{x}}_1 \bm{x}_1 \hat{\hat{\bm{z}}}_2 - \bm{z}_2 \bm{x}_1 \hat{\hat{\bm{z}}}_2 - b \hat{\hat{\bm{z}}}_2 - \hat{\bm{x}}_1 \hat{\bm{z}}_2 \hat{\hat{\bm{z}}}_2 + \bm{z}_2 \hat{\bm{z}}_2 \hat{\hat{\bm{z}}}_2.\label{NC_KdV-lift-cor-g}
\end{align}
\end{subequations}
Equation \eqref{NC_KdV-lift-cor-g} can be written in the following form
\begin{equation}\label{eq-z2}
    c(\tilde{\bm{x}}_1 -\hat{\bm{x}}_1) + (\tilde{\bm{x}}_1 \bm{x}_1  - \tilde{\bm{x}}_1 \tilde{\bm{z}}_2 + \bm{z}_2 \tilde{\bm{z}}_2)\tilde{\tilde{\bm{z}}}_2 - b \tilde{\tilde{\bm{z}}}_2 - \bm{z}_2 \bm{x}_1 \tilde{\tilde{\bm{z}}}_2 - \hat{\bm{x}}_1 \bm{x}_1 \hat{\hat{\bm{z}}}_2 + \bm{z}_2 \bm{x}_1 \hat{\hat{\bm{z}}}_2 + b \hat{\hat{\bm{z}}}_2 + \hat{\bm{x}}_1 \hat{\bm{z}}_2 \hat{\hat{\bm{z}}}_2 - \bm{z}_2 \hat{\bm{z}}_2 \hat{\hat{\bm{z}}}_2 = 0.
\end{equation}

Now, with the use of equations \eqref{NC_KdV-lift-corr-b} and \eqref{NC_KdV-lift-corr-c}, equation \eqref{eq-z2} takes the form
$$
c(\tilde{\tilde{\bm{z}}}_2-\hat{\hat{\bm{z}}}_2) + (\hat{\bm{x}}_1 \bm{x}_1 -\hat{\bm{x}}_1\hat{\bm{z}}_2 + \bm{z}_2 \hat{\bm{z}}_2)\tilde{\tilde{\bm{z}}}_2 - b \tilde{\tilde{\bm{z}}}_2 - \bm{z}_2 \bm{x}_1 \tilde{\tilde{\bm{z}}}_2 - \hat{\bm{x}}_1 \bm{x}_1 \hat{\hat{\bm{z}}}_2 + \bm{z}_2 \bm{x}_1 \hat{\hat{\bm{z}}}_2 + b \hat{\hat{\bm{z}}}_2 + \hat{\bm{x}}_1 \hat{\bm{z}}_2 \hat{\hat{\bm{z}}}_2 - \bm{z}_2 \hat{\bm{z}}_2 \hat{\hat{\bm{z}}}_2 = 0,
$$
or
$$ (\hat{\hat{\bm{z}}}_2 - \tilde{\tilde{\bm{z}}}_2) (c - b + \hat{\bm{x}}_1 \bm{x}_1 - \bm{z}_2 \bm{x}_1 - \hat{\bm{x}}_1 \hat{\bm{z}}_2 + \bm{z}_2 \hat{\bm{z}}_2) = 0,
$$
from which follows that $\hat{\hat{\bm{z}}}_2 = \tilde{\tilde{\bm{z}}}_2$.

Given that $\hat{\hat{\bm{z}}}_2 = \tilde{\tilde{\bm{z}}}_2$, equation \eqref{NC_KdV-lift-corr-b} implies that $\hat{\bm{x}}_1=\tilde{\bm{x}}_1$. After substitution of the latter to \eqref{NC_KdV-lift-corr-c}, it follows that $(\hat{\bm{x}}_1 - \bm{z}_2)(\tilde{\bm{z}}_2 - \hat{\bm{z}}_2) = 0$, i.e. $\tilde{\bm{z}}_2 =\hat{\bm{z}}_2$, which implies $\tilde{\tilde{\bm{x}}}_1 = \hat{\hat{\bm{x}}}_1$. Thus, map \eqref{KdV-YB-NC} is a parametric Yang--Baxter map.

The noninvolutivity of this map can be proved, for instance, by the fact that: $u_1\circ Y_{a,b}=x_1+(a-b)(y_1-x_2)^{-1}\neq x_1$, thus $Y_{a,b}\circ Y_{a,b}\neq\id$.

Regarding the first  integral, one can verify that   $\bm{u}_1-\bm{u}_2+\bm{v}_1-\bm{v}_2\stackrel{\eqref{KdV-YB-NC}}{=\bm{x}_1+\bm{y}_1-\bm{x}_2-\bm{y}_2}$, by  straightforward  substitution.
\end{proof}

\begin{remark}\normalfont
    Although map \eqref{KdV-YB-NC} admits the first integral $I_1=\bm{x}_1+\bm{y}_1-\bm{x}_2-\bm{y}_2$,  it lacks the integral $I_2=(\bm{x}_2-\bm{y}_1)(\bm{x}_1-\bm{y}_2)$ that its commutative version \eqref{KdV-lift} possesses. 
\end{remark}

\subsection{Squeeze down to the noncommutative discrete potential KdV lattice}
Recall that an equation on quad-graph \cite{ABS-2004} is an a difference equation of the form
\begin{equation}\label{quad-graph}
Q(f_{n,m},f_{n+1,m},f_{n,m+1},f_{n+1,m+1};a,b),
\end{equation}
where $Q$ is polynomial affine linear in $f_{n,m},f_{n+1,m},f_{n,m+1}$ and $f_{n+1,m+1}$. Probably, the most popular example of quad-graph equation is the discrete potential KdV equation
\begin{equation}\label{dpKdV}
    (f_{n+1,m+1}-f_{n,m}) (f_{n,m+1}-f_{n+1,m})=a-b.
\end{equation}

Integrability of a quad-graph equation mean that there is a pair of matrices ${\rm M}={\rm M}(f_{n,m},f_{n+1,m},a)$ and ${\rm A}={\rm A}(f_{n,m},f_{n,m+1},b)$ such that equation \eqref{quad-graph} is equivalent to the following Lax equation:
\begin{equation}\label{H1}
{\rm M}(f_{n,m+1},f_{n+1,m+1},a){\rm A}(f_{n,m},f_{n,m+1},b)={\rm A}(f_{n+1,m},f_{n+1,m+1},b){\rm }{\rm M}(f_{n,m},f_{n+1,m},a).
\end{equation}

We aim to relate the noncommutative Yang--Baxter map \eqref{KdV-YB-NC} to the \eqref{H1} similar to the commutative case \cite{Kouloukas}. As we saw in the previous section, the noncommutative map \eqref{KdV-YB-NC} preserves the Yang--Baxter property. Moreover, it shares with its commutative version\eqref{KdV-lift} the following property
$$
\text{If}\quad \bm{x}_2=\bm{y}_1,\quad \text{then}\quad \bm{u}_1=\bm{v}_2.
$$
This will allow us to squeeze map \eqref{KdV-YB-NC} to the noncommutative discrete potential KdV equation.

In particular, we have the following. 

\begin{proposition}
    The Yang--Baxter map \eqref{KdV-YB-NC} squeezes down to the noncommutative discrete potential KdV equation:
    \begin{equation}\label{dpKdV-NC}
    ({\bm f}_{n+1,m+1}-{\bm f}_{n,m}) ({\bm f}_{n,m+1}-{\bm f}_{n+1,m})=a-b,
\end{equation}
which admits the following Lax representation
\begin{equation}\label{H1-NC}
{\rm L}({\bm f}_{n,m+1},{\bm f}_{n+1,m+1},a){\rm L}({\bm f}_{n,m},{\bm f}_{n,m+1},b)={\rm L}({\bm f}_{n+1,m},{\bm f}_{n+1,m+1},b){\rm }{\rm L}({\bm f}_{n,m},{\bm f}_{n+1,m},a),
\end{equation}
where ${\rm L}={\rm L}({\bm f}_{n,m},{\bm f}_{n+1,m},a)=\begin{pmatrix} {\bm f}_{n,m} & a+{\bm f}_{n,m}{\bm f}_{n+1,m}\\ -1 & - {\bm f}_{n+1,m}\end{pmatrix}$.
\end{proposition}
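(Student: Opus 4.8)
The plan is to prove both assertions of the Proposition at once, using a single device: restrict the four edge variables of the Yang--Baxter map \eqref{KdV-YB-NC} to an elementary square of the $(n,m)$-lattice and identify the two components of each edge variable with the potential values $\bm f$ sitting at the endpoints of that edge. Concretely, I would take the bottom edge $\bm x=(\bm x_1,\bm x_2)=({\bm f}_{n,m},{\bm f}_{n+1,m})$, the right edge $\bm y=(\bm y_1,\bm y_2)=({\bm f}_{n+1,m},{\bm f}_{n+1,m+1})$, the top edge $\bm u=(\bm u_1,\bm u_2)=({\bm f}_{n,m+1},{\bm f}_{n+1,m+1})$ and the left edge $\bm v=(\bm v_1,\bm v_2)=({\bm f}_{n,m},{\bm f}_{n,m+1})$. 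This is exactly the assignment obtained by matching the four Lax factors in \eqref{H1-NC} against the refactorisation \eqref{eq-Lax} of the map, with the horizontal edges carrying the parameter $a$ and the vertical edges carrying $b$, and with $\lambda$ set to $0$ so that ${\rm L}(p,q,\alpha)=\begin{pmatrix}p & \alpha+pq\\ -1 & -q\end{pmatrix}$ agrees with \eqref{YBlaxpKdVNC}.

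Next I would check that this identification is consistent with the map and extract the scalar equation. Components \eqref{KdV-YB-NC-b} and \eqref{KdV-YB-NC-c}, i.e. $\bm u_2=\bm y_2$ and $\bm v_1=\bm x_1$, hold automatically because the edges meeting at the top-right (resp.\ bottom-left) corner share the single value ${\bm f}_{n+1,m+1}$ (resp.\ ${\bm f}_{n,m}$). The constraint $\bm x_2=\bm y_1$ is likewise built in, both being the shared bottom-right value ${\bm f}_{n+1,m}$; the property recorded just before the Proposition --- that $\bm x_2=\bm y_1$ forces $\bm u_1=\bm v_2$ --- is precisely what guarantees that \eqref{KdV-YB-NC-a} and \eqref{KdV-YB-NC-d} assign the same top-left value ${\bm f}_{n,m+1}$, so that the potential field is single-valued on the lattice. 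The only remaining content is then relation \eqref{KdV-YB-NC-a}, which under the identification reads ${\bm f}_{n,m+1}-{\bm f}_{n+1,m}=(a-b)({\bm f}_{n,m}-{\bm f}_{n+1,m+1})^{-1}$; multiplying on the right by ${\bm f}_{n,m}-{\bm f}_{n+1,m+1}$ produces the noncommutative quad relation, i.e.\ \eqref{dpKdV-NC}.

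For the Lax representation I would verify \eqref{H1-NC} directly: substitute the four corner potentials and multiply out the two products of $2\times2$ matrices. The $(2,1)$ entries coincide identically, and equating the $(1,1)$ entries yields at once $({\bm f}_{n,m+1}-{\bm f}_{n+1,m})({\bm f}_{n,m}-{\bm f}_{n+1,m+1})=a-b$, which is \eqref{dpKdV-NC} once the factors are reordered (up to the sign/parameter convention). The main obstacle will be the $(2,2)$ and especially the off-diagonal $(1,2)$ entries: in the classical case these are redundant for free, but over $\mathfrak{R}$ one must show they impose no new condition. Here I would use that $a-b\in Z(\mathfrak{R})$ together with the division-ring fact that $QP=c$ with $c\in Z(\mathfrak{R})\setminus\{0\}$ forces $P=cQ^{-1}$ and hence $PQ=c$, so the two orderings of the defining product agree. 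Writing $Q={\bm f}_{n,m+1}-{\bm f}_{n+1,m}$ and $P={\bm f}_{n,m}-{\bm f}_{n+1,m+1}$, substituting ${\bm f}_{n,m+1}={\bm f}_{n+1,m}+Q$ into the remaining entries and repeatedly replacing the central products $QP$ and $PQ$ by $a-b$ should collapse them to identities, which completes the verification. The delicate bookkeeping in that off-diagonal entry, and the need to invoke centrality to commute factors that were silently commuted in the scalar case, is the part I expect to require the most care.
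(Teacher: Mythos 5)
Your proposal is correct, and its core coincides with the paper's own proof: you use exactly the identification $\bm{u}_1=\bm{v}_2=\bm{f}_{n,m+1}$, $\bm{u}_2=\bm{y}_2=\bm{f}_{n+1,m+1}$, $\bm{v}_1=\bm{x}_1=\bm{f}_{n,m}$, $\bm{y}_1=\bm{x}_2=\bm{f}_{n+1,m}$ that the paper uses, justified by the same observation that $\bm{x}_2=\bm{y}_1$ forces $\bm{u}_1=\bm{v}_2$; then \eqref{KdV-YB-NC-b} and \eqref{KdV-YB-NC-c} hold identically, and \eqref{KdV-YB-NC-a}, \eqref{KdV-YB-NC-d} become $\bm{f}_{n,m+1}-\bm{f}_{n+1,m}=(a-b)(\bm{f}_{n,m}-\bm{f}_{n+1,m+1})^{-1}$, giving the quad equation after right multiplication --- exactly the paper's computation. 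Where you genuinely differ is the Lax representation: the paper obtains \eqref{H1-NC} with no extra work, because under this change of variables the four factors of the map's refactorisation \eqref{Lax-KdV-NC} become verbatim the four factors of \eqref{H1-NC}, whereas you propose an entry-by-entry verification of the matrix identity. Your verification does close: writing $\bm{Q}=\bm{f}_{n,m+1}-\bm{f}_{n+1,m}$ and $\bm{P}=\bm{f}_{n,m}-\bm{f}_{n+1,m+1}$, the $(2,1)$ entries agree identically, the $(1,1)$ and $(2,2)$ entries give $\bm{Q}\bm{P}=a-b$ and $\bm{P}\bm{Q}=a-b$ respectively (equivalent to each other since $a-b\in Z(\mathfrak{R})$), and the $(1,2)$ entry reduces to $(b-a)\bm{f}_{n,m+1}+\bm{f}_{n,m+1}\bm{P}\bm{f}_{n,m+1}=(a-b)\bm{f}_{n+1,m}+\bm{f}_{n+1,m}\bm{P}\bm{f}_{n+1,m}$, which follows from $\bm{f}_{n,m+1}\bm{P}\bm{f}_{n,m+1}-\bm{f}_{n+1,m}\bm{P}\bm{f}_{n+1,m}=\bm{Q}\bm{P}\bm{f}_{n,m+1}+\bm{f}_{n+1,m}\bm{P}\bm{Q}$ and centrality. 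So your route is more laborious but self-contained, while the paper's is immediate but leans on the already-established refactorisation of the map. One shared, harmless slip (present in the paper as well): what is actually derived is $(\bm{f}_{n,m+1}-\bm{f}_{n+1,m})(\bm{f}_{n,m}-\bm{f}_{n+1,m+1})=a-b$, which by centrality agrees with the stated form \eqref{dpKdV-NC} only after swapping $a\leftrightarrow b$, i.e.\ up to an overall sign of the right-hand side.
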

\begin{proof}
    From \eqref{KdV-YB-NC-a} and \eqref{KdV-YB-NC-d} follows that if ${\bm y}_1={\bm x}_2$, then ${\bm u}_1={\bm v}_2$. This fact as well as comparing the Lax representations \eqref{Lax-KdV-NC} and \eqref{H1-NC} of \eqref{KdV-YB-NC} and \eqref{dpKdV-NC}, respectively, indicates to set ${\bm u}_1={\bm v}_2={\bm f}_{n,m+1}$, ${\bm u}_2={\bm y}_2={\bm f}_{n+1,m+1}$, ${\bm v}_1={\bm x}_1={\bm f}_{n,m}$ and ${\bm y}_1={\bm x}_2={\bm f}_{n+1,m}$. After this change of variables, \eqref{KdV-YB-NC-b} and \eqref{KdV-YB-NC-c} are identically satisfied, whereas \eqref{KdV-YB-NC-a} and \eqref{KdV-YB-NC-d} imply
     $$
    {\bm f}_{n,m+1}-{\bm f}_{n+1,m}=(a-b)({\bm f}_{n,m}-{\bm f}_{n+1,m+1})^{-1},
    $$
    which can be written as \eqref{dpKdV-NC}, if we multiply by ${\bm f}_{n,m}-{\bm f}_{n+1,m+1}$ from the right.
\end{proof}

\section{Darboux--B\"acklund transformations and NLS type Yang--Baxter maps on division rings}\label{NLS type  maps}
Following \cite{SPS} we define a Darboux transformation as a similarity-type transformation that leaves covariant a Lax operator $\mathfrak{L}=D_x-{\rm U}(\bm{p},\bm{q};\lambda)$. That is, a transformation
\begin{equation}\label{D-transform}
    \mathfrak{L}(\bm{p},\bm{q};\lambda)\rightarrow\mathfrak{L}(\tilde{\bm{p}},\tilde{\bm{q}};\lambda)={\rm M}\mathfrak{L}(\bm{p},\bm{q};\lambda){\rm M}^{-1}.
\end{equation}

In order to find ${\rm M}$ we need to assume an initial form for it. We usually start with the simple case of ${\rm M}$ being linear in the spectral parameter $\lambda$, i.e. ${\rm M}=\lambda {\rm M}^1+{\rm M}^0$.

\subsection{Noncommutative NLS type Darboux transformations}
A Darboux transformation was presented in \cite{SP} for the noncommutative NLS operator 
$$\mathcal{L}_{NLS}=D_x-\lambda\begin{pmatrix} 1 & 0 \\ 0 & -1\end{pmatrix}-\begin{pmatrix} 0 & 2\bm{p} \\ 2\bm{q} & 0\end{pmatrix}
$$
where $\bm{p}$, $\bm{q}$ belong to some noncommutative division ring $\mathfrak{R}$.
The associated Darboux matrix reads:
\begin{equation}\label{DT-NLS}
{\rm M} =\lambda\begin{pmatrix}
1 & 0 \\
0 & 0
\end{pmatrix}
+
\begin{pmatrix}
a+\bm{p}\tilde{\bm{q}} & \bm{p} \\
\tilde{\bm{p}} & 1
\end{pmatrix},\quad \bm{p},\bm{q},\tilde{\bm{p}}\tilde{\bm{q}}\in \mathfrak{R},\quad a\in Z(\mathfrak{R}).
\end{equation}

Here, we construct a Darboux transformation for the noncommutative derivative NLS operator 
$$
    \mathcal{L}_{DNLS}=D_x-\lambda^2 {\rm U}^2-\lambda {\rm U}^1 =D_x-\lambda^2\begin{pmatrix} 1 & 0 \\ 0 & -1\end{pmatrix}-\lambda\begin{pmatrix} 0 & 2\bm{p} \\ 2\bm{q} & 0\end{pmatrix},
$$
which is invariant under the transformation
\begin{equation}\label{symmetry}
    s_1(\lambda):\mathcal{L}_{DNLS}(\lambda)\rightarrow \sigma_3 \mathcal{L}_{DNLS}(-\lambda)\sigma_3,
\end{equation}
where $\sigma_3$ is the standard Pauli matrix $\sigma_3=\begin{pmatrix}1 & 0\\ 0 & -1\end{pmatrix}$. 

Similarly to the commutative case \cite{SPS}, we seek Darboux matrices of the form
$$
\rm{M}=\lambda^2 \rm{M}^2+\lambda \rm{M}^1+\rm{M}_0=\lambda^2\begin{pmatrix}\bm{\alpha} & \bm{\beta}\\ \bm{\gamma} & \bm{\delta} \end{pmatrix}+\lambda\begin{pmatrix}
\bm{\kappa} & \bm{\mu} \\ \bm{\nu} & \bm{\xi} \end{pmatrix}+ \begin{pmatrix}
\bm{\rho} & \bm{\sigma} \\ \bm{\theta} & \bm{\phi} \end{pmatrix}.
$$
However, the above matrix must share the same symmetry \eqref{symmetry} with the Lax operator $\mathcal{L}_{DNLS}$, i.e. we demand that 
$$
\rm{M}(\lambda)\rightarrow \sigma_3 \rm{M}(-\lambda)\sigma_3.
$$
Due to the above symmetry, we have that $\bm{\beta}=\bm{\gamma}=\bm{\kappa}=\bm{\xi}=\bm{\sigma}=\bm{\theta}=0$. Thus, eventually, the Darboux matrix we are looking for has  the following form:
\begin{equation}\label{M-gen-form}
\rm{M}=\lambda^2\begin{pmatrix}\bm{\alpha} & 0\\ 0 & \bm{\delta} \end{pmatrix}+\lambda\begin{pmatrix}
0 & \bm{\mu} \\ \bm{\nu} & 0 \end{pmatrix}+ \begin{pmatrix}
\bm{\rho} & 0 \\ 0 & \bm{\phi} \end{pmatrix}.
\end{equation}
For simplicity, following \cite{SPS}, we seek Darboux matrices for which $\rank \rm{M}^2=1$. 

Specifically, we have the following.

\begin{proposition}
Let $\bm{f}$ and $\bm{g}$ be elements of a noncommutative division ring $\mathfrak{R}$ that commute with elements $\bm{p}\bm{q}_{10}$ and $\bm{q}\bm{p}_{10}$, respectively, i.e. $\bm{f}(\bm{p}\bm{q}_{10})=(\bm{p}\bm{q}_{10})\bm{f}$ and $\bm{g}(\bm{q}\bm{p}_{10})=(\bm{q}\bm{p}_{10})\bm{g}$. Then, all the possible quadratic in $\lambda$ Darboux transformations, ${\rm M}=\lambda^2 {\rm M}^2+\lambda {\rm M}^1+{\rm M}^0$, with $\rank {\rm M}^2=1$, are the following.
\begin{enumerate}
\item Matrix 
\begin{equation}\label{DT-DNLS-1}
{\rm M} =\lambda^2\begin{pmatrix} \bm{f} & 0\\ 0 & 0\end{pmatrix}+\lambda \begin{pmatrix} 0 & \bm{f}\bm{p}\\ \tilde{\bm{q}}\bm{f} & 0 \end{pmatrix}+\begin{pmatrix} c_1 & 0\\ 0 & c_2\end{pmatrix}
\end{equation}
where $\bm{f}$, $\bm{p}$ and $\bm{q}$ satisfy the system of differential equations:
\begin{align}\label{BT-DNLS-1}
\bm{f}_x=2(\tilde{\bm{p}}\tilde{\bm{q}}\bm{f}-\bm{f}\bm{p}\bm{q}),\quad
    (\bm{f}\bm{p})_x=2c_2\tilde{\bm{p}}-2c_1\bm{p},\quad
    (\tilde{\bm{q}}\bm{f})_x=2c_1\tilde{\bm{q}}-2c_2\bm{q}.
\end{align}

\item Matrix 
\begin{equation}\label{DT-DNLS-2}
{\rm M} =\lambda^2\begin{pmatrix} 0 & 0\\ 0 & \bm{g}\end{pmatrix}+\lambda \begin{pmatrix} 0 & -\tilde{\bm{p}}\bm{g}\\ -\bm{g}\tilde{\bm{q}} & 0 \end{pmatrix}+\begin{pmatrix} c_1 & 0\\ 0 & c_2\end{pmatrix}
\end{equation}
where $\bm{f}$, $\bm{p}$ and $\bm{q}$ satisfy the system of differential equations:
\begin{align}\label{BT-DNLS-2}
\bm{g}_x=2(\bm{g}\bm{q}\bm{p}\bm{f}-\tilde{\bm{q}}\tilde{\bm{p}}\bm{g}),\quad
    (\tilde{\bm{p}}\bm{g})_x=2c_1\bm{p}-2c_2\tilde{\bm{p}},\quad
    (\bm{g}\bm{q})_x=2c_2\bm{q}-2c_1\tilde{\bm{q}}.
\end{align}
\end{enumerate}
\end{proposition}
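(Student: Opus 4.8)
The plan is to start from the defining covariance condition \eqref{D-transform}. Acting with both sides on an arbitrary vector function and cancelling the common $D_x$-term, one sees that \eqref{D-transform} is equivalent to the intertwining relation $D_x{\rm M}=\tilde{\rm U}{\rm M}-{\rm M}{\rm U}$, where ${\rm U}=\lambda^2{\rm U}^2+\lambda{\rm U}^1$ and $\tilde{\rm U}$ is the same matrix with $\bm{p},\bm{q}$ replaced by $\tilde{\bm{p}},\tilde{\bm{q}}$. Substituting the admissible ansatz \eqref{M-gen-form}, the left-hand side is a polynomial in $\lambda$ of degree $2$, while the right-hand side has degree $4$. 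The strategy is then to collect the coefficients of $\lambda^4,\lambda^3,\lambda^2,\lambda^1,\lambda^0$ and solve the resulting hierarchy of matrix equations, keeping careful track of the order of factors throughout since $\mathfrak{R}$ is noncommutative.

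I would treat the derivative-free coefficients first. The $\lambda^4$ coefficient is ${\rm U}^2{\rm M}^2-{\rm M}^2{\rm U}^2=0$, which holds automatically because ${\rm U}^2$ and ${\rm M}^2$ are both diagonal and hence commute, so it imposes no restriction. The $\lambda^3$ coefficient is likewise purely algebraic and, read entrywise, expresses the off-diagonal entries of ${\rm M}^1$ through the diagonal entries of ${\rm M}^2$, namely $\bm{\mu}=\bm{\alpha}\bm{p}-\tilde{\bm{p}}\bm{\delta}$ and $\bm{\nu}=\tilde{\bm{q}}\bm{\alpha}-\bm{\delta}\bm{q}$. Finally the $\lambda^0$ coefficient gives $D_x{\rm M}^0=0$, so ${\rm M}^0$ is a constant diagonal matrix, which we name $\mathrm{diag}(c_1,c_2)$ with $c_1,c_2\in Z(\mathfrak{R})$.

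Next I would impose the rank hypothesis. Since ${\rm M}^2=\mathrm{diag}(\bm{\alpha},\bm{\delta})$ over a division ring, the condition $\rank{\rm M}^2=1$ forces exactly one of $\bm{\alpha},\bm{\delta}$ to vanish; this dichotomy is precisely the origin of the two candidate matrices. Taking $\bm{\delta}=0,\ \bm{\alpha}=\bm{f}$ substitutes into the $\lambda^3$ relations to give $\bm{\mu}=\bm{f}\bm{p}$ and $\bm{\nu}=\tilde{\bm{q}}\bm{f}$, reproducing \eqref{DT-DNLS-1}; the complementary choice $\bm{\alpha}=0,\ \bm{\delta}=\bm{g}$ yields the second family \eqref{DT-DNLS-2} after the analogous reduction. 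The two remaining coefficients carry the $x$-derivatives and produce the Darboux--B\"acklund systems: in Case~1 the $(1,1)$ entry of the $\lambda^2$ coefficient becomes $\bm{f}_x=2(\tilde{\bm{p}}\tilde{\bm{q}}\bm{f}-\bm{f}\bm{p}\bm{q})$, its $(2,2)$ entry is satisfied identically, and the off-diagonal $\lambda^1$ coefficient gives $(\bm{f}\bm{p})_x=2c_2\tilde{\bm{p}}-2c_1\bm{p}$ and $(\tilde{\bm{q}}\bm{f})_x=2c_1\tilde{\bm{q}}-2c_2\bm{q}$, i.e. the system \eqref{BT-DNLS-1}; Case~2 is handled symmetrically and delivers \eqref{BT-DNLS-2}.

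The main obstacle, and the point where the hypotheses on $\bm{f}$ and $\bm{g}$ enter, is the \emph{consistency} of this overdetermined system: the coefficient matching produces more scalar relations than there are unknown ring elements, and in the commutative case these are automatically compatible, but over $\mathfrak{R}$ the order of the factors matters. Concretely, I expect the diagonal $\lambda^2$-equation for $\bm{f}$ to be compatible with the off-diagonal $\lambda^1$-equations for $\bm{f}\bm{p}$ and $\tilde{\bm{q}}\bm{f}$ only after one is allowed to commute $\bm{f}$ past the relevant quadratic expression, which is exactly the content of the assumption that $\bm{f}$ commute with $\bm{p}\bm{q}_{10}$ and $\bm{g}$ with $\bm{q}\bm{p}_{10}$. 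I would close the argument by checking this compatibility explicitly under the stated commutativity, and by confirming that ${\rm M}$ is invertible for generic $\lambda$ so that the similarity form \eqref{D-transform} is genuinely well defined.
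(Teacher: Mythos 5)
Your proposal is correct and follows essentially the same route as the paper: substitute the symmetry-reduced ansatz \eqref{M-gen-form} into the covariance condition, collect coefficients of $\lambda^4,\dots,\lambda^0$, use the $\lambda^3$ and $\lambda^0$ equations to fix ${\rm M}^1$ and ${\rm M}^0$, invoke $\rank {\rm M}^2=1$ to split into the two cases $\bm{\delta}=0$ or $\bm{\alpha}=0$, and read off the B\"acklund systems from the remaining coefficients. In fact your bookkeeping is slightly sharper than the paper's: you correctly attribute the first equation of \eqref{BT-DNLS-1} to the $(1,1)$ entry of the $\lambda^2$ coefficient and the other two to the off-diagonal $\lambda^1$ coefficient, whereas the paper ascribes the whole system to \eqref{DT-def-c}.
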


\begin{proof}
    We seek a matrix of the form \eqref{M-gen-form} such that
    $$
    \mathfrak{L}_{DNLS}(\tilde{\bm{p}},\tilde{\bm{q}};\lambda){\rm M}={\rm M}\mathfrak{L}_{DNLS}(\bm{p},\bm{q};\lambda).
    $$
    The above implies the following system of polynomial equations:
    \begin{subequations}\label{DT-def}
        \begin{align}
        &\lambda^4: \quad {\rm U}^2 {\rm M}^2 = {\rm M}^2 {\rm U}^2,\label{DT-def-a}\\
        &\lambda^3:\quad {\rm U}^2{\rm M}^1+\tilde{{\rm U}}^1{\rm M}^2={\rm M}^2{\rm U}^1+{\rm M}^1{\rm U}^2,\label{DT-def-b}\\
        &\lambda^2:\quad{\rm M}^2_x-{\rm U}^2{\rm M}^0-\tilde{{\rm U}}^1{\rm M}^1=-{\rm M}^1{\rm U}^1-{\rm M}^0{\rm U}^2,\label{DT-def-c}\\
        &\lambda^1:\quad {\rm M}^1_x-\tilde{{\rm U}}^1{\rm M}^0=-{\rm M}^0{\rm U}^1,\label{DT-def-d}\\
        &\lambda^0:\quad {\rm M}^0_x=0.\label{DT-def-e}
        \end{align}
    \end{subequations}

    Equation \eqref{DT-def-a} is identically satisfied, whereas equation \eqref{DT-def-e} implies that $\bm{\rho}$ and $\bm{\phi}$ are constants, let $\bm{\rho}=c_1$ and $\bm{\phi}=c_2$. Moreover, from \eqref{DT-def-b} it follows that
    \begin{equation}\label{nu-mu}
        \bm{\mu}=\bm{\alpha}\bm{p}-\tilde{\bm{p}}\bm{\delta}, \quad \bm{\nu}=\tilde{\bm{q}}\bm{\alpha}-\bm{\delta}\bm{q}.
    \end{equation}

    Now, $\rank {\rm M}^2=1$ implies that one of $\bm{\alpha}$ and $\bm{\delta}$ is zero and the other is an arbitrary function. Let $\bm{\delta}=0$ and $\bm{\alpha}=\bm{f}(x)$. Then, from \eqref{nu-mu} we obtain that $ \bm{\mu}=\bm{\alpha}\bm{p}$ and $\bm{\nu}=\tilde{\bm{q}}\bm{\alpha}$. Furthermore, equation \eqref{DT-def-c} implies that $\bm{f}(x)$ must satisfy the system of differential equations \eqref{BT-DNLS-1}.

    Now, if $\bm{\alpha}=0$ and $\bm{\delta}=\bm{g}(x)$, then we can similarly prove that $\rm{M}$ is given by \eqref{DT-DNLS-2}, and its entries obey the system of differential equations \eqref{BT-DNLS-2}.
\end{proof}

The  system of differential  equations \eqref{DT-DNLS-1} admits  the following  first  integral:
$$
\partial_x(\bm{f}\bm{p}\tilde{\bm{q}}\bm{f}-c_2\bm{f})=0.
$$
Indeed,
\begin{align*}
    \partial_x(\bm{f}\bm{p}\tilde{\bm{q}}\bm{f}-c_2\bm{f})&=(\bm{fp})_x\bm{\tilde{q}}\bm{f}+\bm{f}\bm{p}(\tilde{\bm{q}}\bm{f})_x-c_2\bm{f}_x\stackrel{\eqref{DT-DNLS-1}}{=}0.
\end{align*}
Therefore, $\bm{f}\bm{p}\tilde{\bm{q}}\bm{f}-c_2\bm{f}=-a=$const., or, solving for $\bm{f}$:
$$
\bm{f}=\frac{a}{c_2}+\frac{1}{c_2}\bm{f}\bm{p}\tilde{\bm{q}}\bm{f}.
$$
We replace $\bm{f}$ by the above expression in \eqref{DT-DNLS-1}, and it follows that the following matrix
\begin{equation}\label{DT-DNLS-1-pq}
{\rm M} =\lambda^2\begin{pmatrix} \frac{a}{c_2}+\frac{1}{c_2}\bm{f}\bm{p}\tilde{\bm{q}}\bm{f} & 0\\ 0 & 0\end{pmatrix}+\lambda \begin{pmatrix} 0 & \bm{f}\bm{p}\\ \tilde{\bm{q}}\bm{f} & 0 \end{pmatrix}+\begin{pmatrix} c_1 & 0\\ 0 & c_2\end{pmatrix},
\end{equation}
is  a Darboux  matrix  for the derivative NLS operator $\mathcal{L}_{DNLS}$.

\subsection{Noncommutative Adler--Yamilov map}
Let $\mathfrak{R}$ be a noncommutative division ring. We change $(\bm{p},\tilde{\bm{q}})\rightarrow (\bm{x}_1,\bm{x_2})$ in \eqref{DT-NLS}, namely we consider the following matrix
\begin{equation}\label{Lax-NLS}
{\rm M}(\bm{x}_1,\bm{x}_2;a) =\lambda\begin{pmatrix}
1 & 0 \\
0 & 0
\end{pmatrix}
+
\begin{pmatrix}
a+\bm{x}_1\bm{x}_2 & \bm{x}_1 \\
\bm{x}_2 & 1
\end{pmatrix},\quad \bm{x}_1,\bm{x}_2\in\mathfrak{R},\quad\lambda\in\mathbb{C},\quad a\in Z(\mathfrak{R}).
\end{equation}

We substitute ${\rm M}$ to the matrix refactorisation problem \eqref{eq-Lax} in order to generate a noncommutative Yang--Baxter map. In particular, we have the following. 

\begin{proposition} (Noncommutative Adler--Yamilov) Let $\mathfrak{A}=\mathfrak{R}\times\mathfrak{R}\times Z(\mathfrak{R})$. The map $Y_{a,b}:\mathfrak{A}^2 \rightarrow \mathfrak{A}^2$, given by
\begin{subequations}\label{n_nls}
    \begin{align}
        \bm{x}_1\mapsto \bm{u}_1 &= \bm{y}_1-(a-b) \bm{x}_1 (1+\bm{y}_2 \bm{x}_1)^{-1},\\
        \bm{x}_2\mapsto \bm{u}_2 &=\bm{y}_2,\\
        \bm{y}_1\mapsto \bm{v}_1 &=\bm{x}_1,\\
        \bm{y}_2\mapsto \bm{v}_2 &=\bm{x}_2+(a-b)(1+\bm{y}_2\bm{x}_1)^{-1} \bm{y}_2,
    \end{align}
\end{subequations}
has the following Lax representation
\begin{equation}\label{Lax-Adler-Yamilov}
    {\rm M}(\bm{u}_1,\bm{u}_2;a){\rm M}(\bm{v}_1,\bm{v}_2;b)={\rm M}(\bm{y}_1,\bm{y}_2;b){\rm M}(\bm{x}_1,\bm{x}_2;a),
\end{equation}
where $\rm{M}$ is given by \eqref{Lax-NLS}, and admits the following first integral
\begin{equation}\label{1st_int-NLS}
I=\bm{x}_1\bm{x}_2+\bm{y}_1\bm{y}_2.
\end{equation}
\end{proposition}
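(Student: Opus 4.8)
The plan is to substitute the Darboux matrix \eqref{Lax-NLS} into the refactorisation problem \eqref{Lax-Adler-Yamilov} and expand in powers of the spectral parameter $\lambda$, which is central. Writing ${\rm M}(\bm{x}_1,\bm{x}_2;a)=\lambda P+N(\bm{x}_1,\bm{x}_2;a)$ with $P=\left(\begin{smallmatrix}1&0\\0&0\end{smallmatrix}\right)$ and $N(\bm{x}_1,\bm{x}_2;a)=\left(\begin{smallmatrix}a+\bm{x}_1\bm{x}_2 & \bm{x}_1\\ \bm{x}_2 & 1\end{smallmatrix}\right)$, both sides of \eqref{Lax-Adler-Yamilov} are quadratic in $\lambda$, so the matrix identity is equivalent to matching the coefficients of $\lambda^2$, $\lambda^1$ and $\lambda^0$ separately. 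The $\lambda^2$ term is $P^2=P$ on each side and is automatic.

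First I would treat the $\lambda^1$ coefficient $PN(\bm{v}_1,\bm{v}_2;b)+N(\bm{u}_1,\bm{u}_2;a)P=PN(\bm{x}_1,\bm{x}_2;a)+N(\bm{y}_1,\bm{y}_2;b)P$. Since left multiplication by $P$ keeps only the first row and right multiplication by $P$ keeps only the first column, the $(1,2)$ and $(2,1)$ entries give at once $\bm{v}_1=\bm{x}_1$ and $\bm{u}_2=\bm{y}_2$, the $(2,2)$ entry is trivial, and the $(1,1)$ entry reads $\bm{u}_1\bm{u}_2+\bm{v}_1\bm{v}_2=\bm{x}_1\bm{x}_2+\bm{y}_1\bm{y}_2$. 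This last relation is precisely the invariance of $I$ in \eqref{1st_int-NLS}; substituting \eqref{n_nls} one checks $\bm{u}_1\bm{u}_2+\bm{v}_1\bm{v}_2=\bm{y}_1\bm{y}_2+\bm{x}_1\bm{x}_2$ in one line, so the first-integral claim follows immediately.

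Next comes the $\lambda^0$ coefficient, the genuinely noncommutative identity $N(\bm{u}_1,\bm{u}_2;a)N(\bm{v}_1,\bm{v}_2;b)=N(\bm{y}_1,\bm{y}_2;b)N(\bm{x}_1,\bm{x}_2;a)$. Substituting $\bm{v}_1=\bm{x}_1$ and $\bm{u}_2=\bm{y}_2$, the $(2,2)$ entry is automatic; the $(1,2)$ entry becomes $\bm{u}_1(1+\bm{y}_2\bm{x}_1)=(b-a)\bm{x}_1+\bm{y}_1(1+\bm{y}_2\bm{x}_1)$, which on the locus where $1+\bm{y}_2\bm{x}_1\neq0$ (hence invertible in the division ring $\mathfrak{R}$) is solved by right multiplication by $(1+\bm{y}_2\bm{x}_1)^{-1}$ to give the stated $\bm{u}_1$; symmetrically the $(2,1)$ entry is $(1+\bm{y}_2\bm{x}_1)\bm{v}_2=(a-b)\bm{y}_2+(1+\bm{y}_2\bm{x}_1)\bm{x}_2$, giving the stated $\bm{v}_2$ after left multiplication by $(1+\bm{y}_2\bm{x}_1)^{-1}$ and use of the centrality of $a-b$. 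These three entries reproduce exactly the map \eqref{n_nls}.

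The remaining equation, the $(1,1)$ entry of the $\lambda^0$ coefficient, is the step I expect to be the main obstacle: unlike the others it does not define an unknown but must be shown to hold identically once $\bm{u}_1$ and $\bm{v}_2$ are inserted. My plan is to abbreviate $W=1+\bm{y}_2\bm{x}_1$ and to systematically eliminate products using the two reduction rules $\bm{y}_2\bm{x}_1\,W^{-1}=1-W^{-1}$ and $W^{-1}\bm{y}_2\bm{x}_1=1-W^{-1}$, both following from $W-1=\bm{y}_2\bm{x}_1$, while scrupulously preserving the order of all noncommuting factors. The terms carrying a double $W^{-1}$ then cancel in pairs, and the surviving central prefactor $(a-b)a-(a-b)b-(a-b)^2$ vanishes by centrality of $a$ and $b$, so both sides collapse to $(a-b)(\bm{y}_1\bm{y}_2-\bm{x}_1\bm{x}_2)$. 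Since noncommutativity forbids the usual rearrangements, keeping the factor order correct throughout this reduction is the delicate part. Once this identity is confirmed, \eqref{n_nls} solves \eqref{Lax-Adler-Yamilov}, and the first integral \eqref{1st_int-NLS} is invariant by the $\lambda^1$ computation above.
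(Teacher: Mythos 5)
Your proposal is correct and takes essentially the same route as the paper: both reduce the refactorisation problem \eqref{Lax-Adler-Yamilov} to entry-wise polynomial equations, obtain $\bm{v}_1=\bm{x}_1$, $\bm{u}_2=\bm{y}_2$, solve the $(1,2)$ and $(2,1)$ entries for $\bm{u}_1$ and $\bm{v}_2$ by one-sided multiplication with $(1+\bm{y}_2\bm{x}_1)^{-1}$, and check that the remaining $(1,1)$ identities hold after substitution. Your only organisational difference is extracting $\bm{v}_1=\bm{x}_1$, $\bm{u}_2=\bm{y}_2$ and the relation $\bm{u}_1\bm{u}_2+\bm{v}_1\bm{v}_2=\bm{x}_1\bm{x}_2+\bm{y}_1\bm{y}_2$ from the $\lambda^1$ coefficient, which makes explicit that the paper's first polynomial equation and the first-integral claim \eqref{1st_int-NLS} are one and the same computation.
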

\begin{proof}
    The matrix refactorisation problem \eqref{Lax-Adler-Yamilov} is equivalent to $\bm{u}_2=\bm{y}_2$, $\bm{v}_1=\bm{x}_1$ and the system of polynomial equations
    \begin{subequations}\label{NLS-sys}
    \begin{align}
    \bm{u}_1\bm{y}_2 +\bm{x}_1\bm{v}_2&=\bm{y}_1\bm{y}_2+\bm{x}_1\bm{x}_2,\label{NLS-sys-a}\\
    (a+\bm{u}_1\bm{y}_2)(b+\bm{x}_1\bm{v}_2)+\bm{u}_1\bm{v}_2 &=  (b+\bm{y}_1\bm{y}_2)(a+\bm{x}_1\bm{x}_2)+\bm{y}_1\bm{x}_2,\label{NLS-sys-b}\\
    (a+\bm{u}_1\bm{y}_2)\bm{x}_1+\bm{u}_1 &=  (b+\bm{y}_1\bm{y}_2)\bm{x}_1+\bm{y}_1,\label{NLS-sys-c}\\
    \bm{y}_2(b+\bm{x}_1\bm{v}_2)+\bm{v}_2 &= \bm{y}_2(a+\bm{x}_1\bm{x}_2)+\bm{x}_2\label{NLS-sys-d}.
    \end{align}
\end{subequations}

    Equations \eqref{NLS-sys-c} and \eqref{NLS-sys-d} can be written as $\bm{u}_1(1+\bm{y}_2\bm{x}_1)=(b-a)\bm{x}_1+\bm{y}_1(1+\bm{y}_2\bm{x}_1)=(b-a)\bm{x}_1$ and $(1+\bm{y}_2\bm{x}_1)\bm{v}_2=(a-b)\bm{y}_2+(1+\bm{y}_2\bm{x}_1)\bm{x}_2$. Multiplying these equations by $(1+\bm{y}_2\bm{x}_1)^{-1}$ from the left and the right, respectively, we obtain
    $$
    \bm{u}_1=\bm{y}_1-(a-b) \bm{x}_1 (1+\bm{y}_2 \bm{x}_1)^{-1}\quad\text{and}\quad\bm{v_2}=\bm{x}_2+(a-b)(1+\bm{y}_2\bm{x}_1)^{-1} \bm{y}_2
    $$
    It can be readily verified that $\bm{u}_1$ and $\bm{v}_2$ given by the latter relations satisfy equations \eqref{NLS-sys-a} and \eqref{NLS-sys-b}.

   Finally, $\bm{u}_1\bm{u}_2+\bm{v}_1\bm{v}_2\stackrel{\eqref{n_nls}}{=\bm{x}_1\bm{x}_2+\bm{y}_1\bm{y}_2}$, thus \eqref{1st_int-NLS} is a first  integral of \eqref{n_nls}.
\end{proof}

\begin{remark}\normalfont Map \eqref{n_nls} is the noncommutative version of the Adler--Yamilov Yang--Baxter map \cite{Sokor-Sasha, Kouloukas, Pap-Tongas}. Map \eqref{n_nls} can also be seen in it equivalent form
\begin{align*}
        \bm{x}_1\mapsto \bm{u}_1 &= \bm{y}_1-(a-b) \bm{x}_1 (1+\bm{y}_2 \bm{x}_1)^{-1},\\
        \bm{x}_2\mapsto \bm{u}_2 &=\bm{y}_2,\\
        \bm{y}_1\mapsto \bm{v}_1 &=\bm{x}_1,\\
        \bm{y}_2\mapsto \bm{v}_2 &=\bm{x}_2+(a-b)(1+\bm{y}_2\bm{x}_1)^{-1} \bm{y}_2,
    \end{align*}
\end{remark}

The natural question arises as to whether the Adler--Yamilov map preserves the Yang--Baxter property in the noncommutative case. Specifically, we have the following.

\begin{theorem}
    Map \eqref{n_nls} is a Yang--Baxter map.
\end{theorem}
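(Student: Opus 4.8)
The plan is to follow the same blueprint that established the noncommutative KdV lift: convert the Yang--Baxter equation into a matrix trifactorisation problem by means of the Lax representation \eqref{Lax-Adler-Yamilov}, and then show that this trifactorisation is rigid. The decisive feature here, in contrast to the KdV case, is that the refactorisation problem \eqref{Lax-Adler-Yamilov} determines the output of the map \emph{uniquely}: the preceding proposition showed that system \eqref{NLS-sys} forces $\bm{u}_2=\bm{y}_2$, $\bm{v}_1=\bm{x}_1$ and pins down $\bm{u}_1,\bm{v}_2$ with no free parameter, so \eqref{Lax-Adler-Yamilov} genuinely defines a map rather than a correspondence. Hence the hypotheses of the Kouloukas--Papageorgiou criterion (Proposition \ref{KP}) are in force --- the map already carries the Lax representation \eqref{Lax-Adler-Yamilov}, and its proof is purely formal, transferring verbatim to a division ring. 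It therefore suffices to prove that the trifactorisation problem
\[
{\rm M}(\hat{\bm{x}}_1,\hat{\bm{x}}_2;a)\,{\rm M}(\hat{\bm{y}}_1,\hat{\bm{y}}_2;b)\,{\rm M}(\hat{\bm{z}}_1,\hat{\bm{z}}_2;c)={\rm M}(\bm{x}_1,\bm{x}_2;a)\,{\rm M}(\bm{y}_1,\bm{y}_2;b)\,{\rm M}(\bm{z}_1,\bm{z}_2;c)
\]
forces $\hat{\bm{x}}_i=\bm{x}_i$, $\hat{\bm{y}}_i=\bm{y}_i$, $\hat{\bm{z}}_i=\bm{z}_i$ for $i=1,2$ and all central $a,b,c$.

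Next I would expand both triple products. Since each factor \eqref{Lax-NLS} is affine in $\lambda$ with idempotent rank-one leading coefficient $\lambda\,\mathrm{diag}(1,0)$, the two products are cubic polynomials in $\lambda$ whose $2\times2$ matrix coefficients I would equate power by power. The coefficient of $\lambda^{3}$ reduces to $\mathrm{diag}(1,0)$ on both sides and is automatic, while the coefficients of $\lambda^{2},\lambda^{1},\lambda^{0}$ produce a finite system of noncommutative polynomial equations --- the triple-level analogue of \eqref{NLS-sys}. Because $a,b,c$ lie in the centre $Z(\mathfrak{R})$ and may be varied independently, within each $\lambda$-coefficient one may further compare terms according to their dependence on the monomials in $a,b,c$; this separates the system into blocks that can be resolved in succession, exactly as in the commutative Kouloukas--Papageorgiou argument.

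The elimination itself is where the work lies, and it is the main obstacle. The strategy is to order the equations so that at each stage a single hatted variable is isolated as already-determined data multiplied by an invertible element, exploiting that every nonzero element of $\mathfrak{R}$ has a two-sided inverse. As in the KdV lift proof, I expect the highest-order coefficients to fix the simple combinations first (the analogues of relations such as $\hat{\bm{x}}_1-\hat{\bm{z}}_2=\bm{x}_1-\bm{z}_2$), after which the lower-order coefficients, once these are substituted, should collapse to equations of the form $(\text{invertible})\cdot(\hat{\bm{v}}-\bm{v})=0$ or $(\hat{\bm{v}}-\bm{v})\cdot(\text{invertible})=0$, forcing equality one variable at a time. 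The delicate point is the noncommutative bookkeeping: every multiplication must be kept on its correct side, the inverse factors $(1+\bm{y}_2\bm{x}_1)^{-1}$ must be manipulated without assuming they commute with neighbouring terms, and the eliminations must be arranged so that each cancellation legitimately multiplies by a nonzero --- hence invertible --- element rather than by a quantity that could vanish. Once the trifactorisation is shown to be rigid in this manner, Proposition \ref{KP} yields the parametric Yang--Baxter equation \eqref{pybeq} and the theorem follows.
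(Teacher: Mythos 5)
Your outline reproduces the paper's own strategy --- pass from the Yang--Baxter equation to a trifactorisation problem for the Lax matrix \eqref{Lax-NLS} and invoke Proposition \ref{KP} --- but it has a genuine gap: the rigidity of the trifactorisation, i.e.\ that the resulting system of noncommutative polynomial equations (the paper's system \eqref{alg-var}) admits only the trivial solution, is the entire technical content of the theorem, and you do not prove it; you only predict, by analogy with the KdV lift, that the equations ``should collapse'' to factors of the form $(\text{invertible})\cdot(\hat{\bm{v}}-\bm{v})=0$. In the paper this step occupies the whole proof: one solves \eqref{alg-var-h} for $\bm{v}_1(1+\bm{v}_2\bm{z}_1)$, combines with \eqref{alg-var-d} to get $\bm{v}_2=\bm{x}_2(\bm{u}_1-\bm{x}_1)\bm{x}_2+\bm{y}_2$, expresses $\bm{v}_1$ and $\bm{w}_2$ in terms of $\bm{u}_1$, and after substitution into \eqref{alg-var-e} reaches
\begin{equation*}
(\bm{u}_1-\bm{x}_1)\big(\bm{x}_2b\bm{z}_1+\bm{x}_2\bm{y}_1\bm{y}_2\bm{z}_1+\bm{y}_2\bm{z}_1+\bm{x}_2\bm{y}_1+1-a\bm{x}_2\bm{z}_1\big)=0,
\end{equation*}
whence $\bm{u}_1=\bm{x}_1$ in the division ring, and the remaining equalities follow in cascade. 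None of this is routine: the order of eliminations, which equations to pair, and which factors are generically nonzero are specific to this Lax matrix and do not transfer verbatim from the KdV computation.

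Moreover, the one concrete device you do propose for organising the elimination --- separating each $\lambda$-coefficient into blocks by comparing monomials in $a,b,c$, ``since $a,b,c$ may be varied independently'' --- is not legitimate in this context. When Proposition \ref{KP} is applied, the two triples being compared are the outputs of the two sides of the Yang--Baxter equation acting on $(\bm{x},\bm{y},\bm{z})$, and these outputs themselves depend on $a,b,c$; the uniqueness hypothesis of \eqref{trifac} must therefore be verified for each \emph{fixed} choice of central parameters. Treating the hatted unknowns as constants while varying $a,b,c$ establishes only the weaker implication $[\forall a,b,c:\ \text{trifactorisation}]\Rightarrow[\text{equality}]$, which does not feed into Proposition \ref{KP}. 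The paper's proof accordingly keeps $a,b,c$ fixed throughout and instead exploits the division-ring property that a product vanishes only if one of its factors does. So your plan is right in outline, but as written it neither executes the decisive computation nor offers a valid shortcut for it.
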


\begin{proof}
    For matrix \eqref{DT-NLS} we consider the following  
    $$
    {\rm M}(\bm{u}_1,\bm{u}_2;a){\rm M}(\bm{v}_1,\bm{v}_2;b){\rm M}(\bm{z}_1,\bm{z}_2;c)={\rm M}(\bm{z}_1,\bm{x}_2;a){\rm M}(\bm{y}_1,\bm{y}_2;b){\rm M}(\bm{z}_1,\bm{z}_2;c).
    $$

    The above matrix trifactorisation problem is equivalent to $\bm{u}_2=\bm{x}_2$, $\bm{w}_1=\bm{z}_1$ and the system of equations
\begin{subequations}\label{alg-var}
    \begin{align}
        &\bm{z}_1\bm{w}_2 + \bm{u}_1\bm{x}_2 + \bm{v}_1\bm{v}_2 = \bm{z}_1\bm{z}_2 + \bm{x}_1\bm{x}_2 + \bm{y}_1\bm{y}_2,\label{alg-var-a}\\
        &\bm{u}_1\bm{v}_2 + \bm{v}_1\bm{w}_2+\bm{v}_1\bm{v}_2(c+a +\bm{z}_1\bm{w}_2) + (b+a)\bm{z}_1\bm{w}_2  + \bm{u}_1\bm{x}_2(b + c+ \bm{v}_1\bm{v}_2  + \bm{z}_1\bm{w}_2) =\nonumber \\
        & \bm{x}_1\bm{y}_2 + \bm{y}_1\bm{z}_2+\bm{y}_1\bm{y}_2(c+a +\bm{z}_1\bm{z}_2) + (b+a)\bm{z}_1\bm{z}_2  + \bm{x}_1\bm{x}_2(b + c+ \bm{y}_1\bm{y}_2  + \bm{z}_1\bm{z}_2),\label{alg-var-b}\\
       & (\bm{u}_1+a\bm{v}_1)(c\bm{v}_2+\bm{w}_2 ) + \bm{u}_1\bm{x}_2(b + \bm{v}_1\bm{v}_2 )(c+\bm{z}_1\bm{w}_2)+ a(b+\bm{v}_1\bm{v}_2)\bm{z}_1\bm{w}_2   + \bm{u}_1(\bm{v}_2\bm{z}_1+\bm{x}_2\bm{v}_1)\bm{w}_2  =\nonumber \\
       &(\bm{x}_1+a\bm{y}_1)(c\bm{y}_2+\bm{z}_2 ) + \bm{x}_1\bm{x}_2(b + \bm{y}_1\bm{y}_2 )(c+\bm{z}_1\bm{z}_2)+ a(b+\bm{y}_1\bm{y}_2)\bm{z}_1\bm{z}_2   + \bm{x}_1(\bm{y}_2\bm{z}_1+\bm{x}_2\bm{y}_1)\bm{z}_2,\label{alg-var-c}\\
       & \bm{u}_1\bm{x}_2\bm{z}_1 + \bm{v}_1(1+\bm{v}_2\bm{z}_1) = \bm{x}_1\bm{x}_2\bm{z}_1 + \bm{y}_1(1+\bm{y}_2\bm{z}_1)\label{alg-var-d}\\
       &  b\bm{u}_1\bm{x}_2 \bm{z}_1 + (\bm{u}_1+ \bm{u}_1\bm{x}_2\bm{v}_1+a \bm{v}_1) (1+ \bm{v}_2 \bm{z}_1) = \bm{x}_1\bm{x}_2 b \bm{z}_1 + (\bm{x}_1+ \bm{x}_1\bm{x}_2\bm{y}_1+a \bm{y}_1) (1+ \bm{y}_2\bm{z}_1),\label{alg-var-e}\\
       &\bm{x}_2(\bm{z}_1\bm{w}_2 + \bm{v}_1\bm{v}_2) + \bm{v}_2 = \bm{x}_2(\bm{z}_1\bm{z}_2 + \bm{y}_1\bm{y}_2) + \bm{y}_2,\label{alg-var-f}\\
       & (1+\bm{x}_2\bm{v}_1) \bm{v}_2(c + \bm{z}_1\bm{w}_2) + (1+b\bm{x}_2 \bm{z}_1+\bm{x}_2\bm{v}_1)\bm{w}_2=\nonumber \\
       &(1+\bm{x}_2\bm{y}_1) \bm{y}_2(c + \bm{z}_1\bm{z}_2) + (1+b\bm{x}_2 \bm{z}_1+\bm{x}_2\bm{y}_1)\bm{z}_2,\label{alg-var-g}\\
       &\bm{x}_2\bm{v}_1\bm{v}_2\bm{z}_1 + \bm{v}_2 \bm{z}_1 + \bm{x}_2\bm{v}_1 = \bm{x}_2\bm{y}_1\bm{y}_2\bm{z}_1 + \bm{y}_2\bm{z}_1 + \bm{x}_2\bm{y}_1,\label{alg-var-h}
    \end{align}
    \end{subequations}
    for $\bm{u}_1, \bm{v}_1, \bm{v}_2$  and $\bm{w}_2$.

From  \eqref{alg-var-h} we obtain $\bm{v}_1(1+\bm{v}_2\bm{z}_1) = \bm{y}_1\bm{y}_2\bm{z}_1 + \bm{x}_2^{-1}\bm{y}_2\bm{z}_1 + \bm{y}_1- \bm{x}_2^{-1}\bm{v}_2 \bm{z}_1$. Substitution of the latter to  \eqref{alg-var-d} implies $\bm{u}_1\bm{x}_2\bm{z}_1 + \bm{x}_2^{-1}\bm{y}_2\bm{z}_1 - \bm{x}_2^{-1}\bm{v}_2\bm{z}_1 = \bm{x}_1\bm{x}_2\bm{z}_1$, or, after multiplication by $\bm{x}_2$ and $\bm{z}_1^{-1}$ from the left and  right, respectively:
\begin{equation}\label{v2}
	\bm{v}_2 = \bm{x}_2(\bm{u}_1 - \bm{x}_1) \bm{x}_2 + \bm{y}_2.
 \end{equation}

Now, we  substitute $\bm{v}_2$ from \eqref{v2} to  \eqref{alg-var-d} in order to express $\bm{v}_1$ in terms of $\bm{u}_1$, namely we obtain:
\begin{equation}\label{v1}
	\bm{v}_1 = ((\bm{x}_1-\bm{u}_1)\bm{x}_2\bm{z}_1 + \bm{y}_1\bm{y}_2\bm{z}_1 + \bm{y}_1)(\bm{x}_2(\bm{u}_1 - \bm{x}_1) \bm{x}_2\bm{z}_1 + \bm{y}_2\bm{z}_1 + 1)^{-1}.
	\end{equation}

 Next, we solve \eqref{alg-var-h} for $\bm{v}_1\bm{v}_2$,  and we obtain:
 \begin{equation}\label{v1v2}
 \bm{v}_1\bm{v}_2 = \bm{y}_1\bm{y}_2 +\bm{x}_2^{-1}\bm{y}_2 + \bm{y}_1\bm{z}_1^{-1} - \bm{x}_2^{-1}\bm{v}_2 - \bm{v}_1\bm{z}_1^{-1}
 \end{equation}.
 
 Moreover, we substitute \eqref{v2}, \eqref{v1}  and \eqref{v1v2}  to \eqref{alg-var-f}, and  we obtain 
 \begin{equation}\label{w2}
	\bm{w}_2 = \bm{z}_2 + \bm{z}_1^{-1} [(\bm{y}_1\bm{x}_2 + 1)(\bm{x}_1 - \bm{u}_1)\bm{x}_2\bm{z}_1(\bm{x}_2(\bm{u}_1-\bm{x}_1)\bm{x}_2\bm{z}_1 + \bm{y}_2\bm{z}_1 + 1)^{-1}]\bm{z}_1^{-1}.
	\end{equation}

 Finally, after substitution of \eqref{v1v2} into  \eqref{alg-var-e}, and replacement of $\bm{v_2}$ in the resulted equation by \eqref{v2}, it  follows that: 
 $$(\bm{u}_1-\bm{x}_1) (\bm{x}_2b\bm{z}_1 + \bm{x}_2\bm{y}_1\bm{y}_2\bm{z}_1 + \bm{y}_2\bm{z}_1 + \bm{x}_2\bm{y}_1 + 1 - a\bm{x}_2\bm{z}_1) = 0.$$
 The latter implies $\bm{u_1}=\bm{x}_1$. Then, from \eqref{v2}, \eqref{v1} and \eqref{w2}, we obtain $\bm{v}_2=\bm{y}_2$, $\bm{v}_1=\bm{y}_1$ and $\bm{w}_2=\bm{z}_2$.

We showed that system \eqref{alg-var} implies
$$\bm{u}_1=\bm{x}_1, \bm{u}_2=\bm{x}_2,\bm{v}_1=\bm{y}_1,\bm{v}_2=\bm{y}_2,\bm{w}_1=\bm{z}_1,\bm{w}_2=\bm{z}_2.$$
Thus, according to Proposition \ref{KP}, map \eqref{n_nls} is a  Yang--Baxter map.
\end{proof}

\subsection{Noncommutative Yang--Baxter map of DNLS type}
Let $\mathfrak{R}$ be a noncommutative division ring. We change $(\bm{f}\bm{p},\tilde{\bm{q}}\bm{f},c_1,c_2)\rightarrow (\bm{x}_1,\bm{x_2},1,1)$ in \eqref{DT-DNLS-1-pq}, namely we consider the following matrix
\begin{equation}\label{Lax-DNLS}
{\rm M} =\lambda^2\begin{pmatrix} a+\bm{x}_1\bm{x}_2 & 0\\ 0 & 0\end{pmatrix}+\lambda \begin{pmatrix} 0 & \bm{x}_1\\ \bm{x}_2 & 0 \end{pmatrix}+\begin{pmatrix} 1 & 0\\ 0 & 1\end{pmatrix}.
\end{equation}

We substitute ${\rm M}$ to the matrix refactorisation problem \eqref{eq-Lax} in order to generate a noncommutative Yang--Baxter map. In particular, we have the following. 

\begin{proposition} (Noncommutative DNLS map) Let $\mathfrak{A}=\mathfrak{R}\times\mathfrak{R}\times Z(\mathfrak{R})$. The map $Y_{a,b}:\mathfrak{A}^2 \rightarrow \mathfrak{A}^2$, given by
\begin{subequations}\label{n_dnls}
    \begin{align}
        \bm{x}_1\mapsto \bm{u}_1 &= \bm{y}_1+(a-b) \bm{x}_1 (a-\bm{y}_2 \bm{x}_1)^{-1},\label{n_dnls-a}\\
        \bm{x}_2\mapsto \bm{u}_2 &=\bm{y}_2\bm{x}_1\left[1-(a-b)(a-\bm{y}_2\bm{x}_1)^{-1}\right]^{-1}\bm{x}_1^{-1},\label{n_dnls-b}\\
        \bm{y}_1\mapsto \bm{v}_1 &=\bm{x}_1\left[1-(a-b)(a-\bm{y}_2\bm{x_1})^{-1}\right],\label{n_dnls-c}\\
        \bm{y}_2\mapsto \bm{v}_2 &=\bm{x}_2+\bm{y}_2-\bm{y}_2\bm{x}_1\left[1-(a-b)(a-\bm{y}_2\bm{x}_1)^{-1}\right]^{-1}\bm{x}_1^{-1},\label{n_dnls-d}
    \end{align}
\end{subequations}
has the following Lax representation
\begin{equation}\label{Lax-DNLS-YB}
    {\rm M}(\bm{u}_1,\bm{u}_2;a){\rm M}(\bm{v}_1,\bm{v}_2;b)={\rm M}(\bm{y}_1,\bm{y}_2;b){\rm M}(\bm{x}_1,\bm{x}_2;a),
\end{equation}
where $\rm{M}$ is given by \eqref{Lax-DNLS}, and admits the following functionally independent first integrals
\begin{equation}\label{1st_int-DNLS}
I_1=\bm{x}_1+\bm{y}_1 \quad\text{and}\quad I_2=\bm{x}_2+\bm{y}_2
\end{equation}
\end{proposition}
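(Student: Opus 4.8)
The plan is to follow the derivation used for the noncommutative Adler--Yamilov map and read the map off from the refactorisation \eqref{Lax-DNLS-YB}. Substituting the Darboux matrix ${\rm M}$ of \eqref{Lax-DNLS} into \eqref{Lax-DNLS-YB} and using that $a,b,\lambda\in Z(\mathfrak{R})$, both sides become degree-four polynomials in $\lambda$, so equating coefficients of $\lambda^0,\dots,\lambda^4$ entrywise produces a finite system of equations in $\mathfrak{R}$. The first thing I would record are the $\lambda^1$-coefficients of the $(1,2)$ and $(2,1)$ entries, namely $\bm{u}_1+\bm{v}_1=\bm{x}_1+\bm{y}_1$ and $\bm{u}_2+\bm{v}_2=\bm{x}_2+\bm{y}_2$. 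Under the identifications $\bm{x}_1\mapsto\bm{u}_1$, $\bm{y}_1\mapsto\bm{v}_1$ and $\bm{x}_2\mapsto\bm{u}_2$, $\bm{y}_2\mapsto\bm{v}_2$ these are precisely the invariance of $I_1=\bm{x}_1+\bm{y}_1$ and $I_2=\bm{x}_2+\bm{y}_2$, so the first integrals \eqref{1st_int-DNLS} drop out directly (and are equally confirmed by substituting \eqref{n_dnls}, since $a-b$ is central and cancels in $\bm{u}_1+\bm{v}_1$, while $\bm{v}_2$ is defined as $\bm{x}_2+\bm{y}_2-\bm{u}_2$).

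To recover the map itself I would combine three of the remaining relations. The $\lambda^2$-coefficient of the $(2,2)$ entry gives $\bm{u}_2\bm{v}_1=\bm{y}_2\bm{x}_1$, and the $\lambda^3$-coefficient of the $(1,2)$ entry gives $(a+\bm{u}_1\bm{u}_2)\bm{v}_1=(b+\bm{y}_1\bm{y}_2)\bm{x}_1$. Inserting $\bm{u}_2\bm{v}_1=\bm{y}_2\bm{x}_1$ and $\bm{u}_1=\bm{x}_1+\bm{y}_1-\bm{v}_1$ into the latter and cancelling the common term $\bm{y}_1\bm{y}_2\bm{x}_1$ collapses it to the one-sided linear equation $\bm{v}_1(a-\bm{y}_2\bm{x}_1)=\bm{x}_1(b-\bm{y}_2\bm{x}_1)$. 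Writing $b-\bm{y}_2\bm{x}_1=(a-\bm{y}_2\bm{x}_1)-(a-b)$ and using centrality of $a-b$ then yields $\bm{v}_1=\bm{x}_1[1-(a-b)(a-\bm{y}_2\bm{x}_1)^{-1}]$, which is \eqref{n_dnls-c}. Back-substitution into $\bm{u}_1=\bm{x}_1+\bm{y}_1-\bm{v}_1$, into $\bm{u}_2=\bm{y}_2\bm{x}_1\bm{v}_1^{-1}$, and into $\bm{v}_2=\bm{x}_2+\bm{y}_2-\bm{u}_2$ reproduces \eqref{n_dnls-a}, \eqref{n_dnls-b} and \eqref{n_dnls-d}. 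The noncommutative bookkeeping throughout rests on the quasi-commutation identity $\bm{x}_1(a-\bm{y}_2\bm{x}_1)=(a-\bm{x}_1\bm{y}_2)\bm{x}_1$ and its inverse form $\bm{x}_1(b-\bm{y}_2\bm{x}_1)^{-1}=(b-\bm{x}_1\bm{y}_2)^{-1}\bm{x}_1$, which transport $\bm{x}_1$ across the resolvent-type factors; I would of course also note that the relevant inverses $(a-\bm{y}_2\bm{x}_1)^{-1}$ and $[1-(a-b)(a-\bm{y}_2\bm{x}_1)^{-1}]^{-1}$ exist on the open dense domain where these elements are nonzero.

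I expect the main obstacle to be the consistency of the over-determined system: it contains seven relations but only four unknowns, so after fixing the map with four of them I must verify that the remaining ones --- the $\lambda^4$ and $\lambda^2$ coefficients of the $(1,1)$ entry and the $\lambda^3$ coefficient of the $(2,1)$ entry --- hold identically. The $(2,1)$ relation $\bm{u}_2(b+\bm{v}_1\bm{v}_2)=\bm{y}_2(a+\bm{x}_1\bm{x}_2)$ furnishes the alternative expression $\bm{u}_2=(b-\bm{y}_2\bm{x}_1)^{-1}\bm{y}_2(a-\bm{x}_1\bm{y}_2)$, and reconciling it with $\bm{u}_2=\bm{y}_2\bm{x}_1\bm{v}_1^{-1}$ is the delicate point: it uses that $a-\bm{x}_1\bm{y}_2$ and $b-\bm{x}_1\bm{y}_2$ commute (they differ by the central $a-b$) together with $(b-\bm{y}_2\bm{x}_1)^{-1}\bm{y}_2=\bm{y}_2(b-\bm{x}_1\bm{y}_2)^{-1}$, and one must resist clearing any inverse symmetrically. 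The two $(1,1)$ identities are then checked by the same substitution, and no new phenomenon arises. Finally, $I_1$ depends only on $\bm{x}_1,\bm{y}_1$ and $I_2$ only on $\bm{x}_2,\bm{y}_2$; since they involve disjoint sets of variables they are functionally independent, which completes the proof.
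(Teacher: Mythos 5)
Your proposal is correct and takes essentially the same approach as the paper: both expand the refactorisation problem in powers of $\lambda$ to get the same seven polynomial equations, solve the same three of them (the two linear relations together with $\bm{u}_2\bm{v}_1=\bm{y}_2\bm{x}_1$ and $(a+\bm{u}_1\bm{u}_2)\bm{v}_1=(b+\bm{y}_1\bm{y}_2)\bm{x}_1$) to obtain the map, verify the remaining three by substitution, and read the first integrals off the linear relations. The only differences are cosmetic: you eliminate $\bm{u}_1$ and solve for $\bm{v}_1$ where the paper eliminates $\bm{v}_1$ and solves for $\bm{u}_1$, and you supply slightly more detail than the paper's ``straightforward computation'' (the alternative expression $\bm{u}_2=(b-\bm{y}_2\bm{x}_1)^{-1}\bm{y}_2(a-\bm{x}_1\bm{y}_2)$ and the disjoint-variables argument for functional independence).
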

\begin{proof}
     The matrix refactorisation problem \eqref{Lax-Adler-Yamilov} is equivalent to the system of polynomial equations
    \begin{subequations}\label{DNLS-sys}
    \begin{align}
    (a+\bm{u}_1\bm{u}_2)(b+\bm{v}_1\bm{v}_2)&= (b+\bm{y}_1\bm{y}_2)(a+\bm{x}_1\bm{x}_2),\label{DNLS-sys-a}\\
\bm{u}_1\bm{u}_2+\bm{v}_1\bm{v}_2+\bm{u}_1\bm{v}_2&=\bm{y}_1\bm{y}_2+\bm{x}_1\bm{x}_2+\bm{y}_1\bm{x}_2,\label{DNLS-sys-b}\\
(a+\bm{u}_1\bm{u}_2)\bm{v}_1&=(b+\bm{y}_1\bm{y}_2)\bm{x}_1,\label{DNLS-sys-c}\\
\bm{u}_2(b+\bm{v}_1\bm{v}_2)&=\bm{y}_2(a+\bm{x}_1\bm{x}_2),\label{DNLS-sys-c-2}\\
\bm{u}_1+\bm{v}_1&=\bm{x}_1+\bm{y}_1,\label{DNLS-sys-d}\\
\bm{u}_2+\bm{v}_2&=\bm{x}_2+\bm{y}_2,\label{DNLS-sys-e}\\
\bm{u}_2\bm{v}_1&=\bm{y}_2\bm{x}_1.\label{DNLS-sys-f}
  \end{align}
    \end{subequations}

Equation \eqref{DNLS-sys-c} with use  of \eqref{DNLS-sys-f}  can be rewritten as  $a\bm{v}_1+\bm{u}_1\bm{y}_2\bm{x}_1=b\bm{x}_1+\bm{y}_1\bm{y}_2\bm{x}_1$. Substituting to  the latter equation $\bm{v}_1=\bm{x}_1+\bm{y}_1-\bm{u}_1$  from \eqref{DNLS-sys-d}, we  obtain
$$
\bm{u}_1(a-\bm{y}_2\bm{x_1})=(a-b)\bm{x}_1+\bm{y}_1(a-\bm{y}_2\bm{x_1}),
$$
which implies \eqref{n_dnls-a} after multiplication by $(a-\bm{y}_2\bm{x_1})^{-1}$ from the right. Then, from $\bm{v}_1=\bm{x}_1+\bm{y}_1-\bm{u}_1$, we obtain \eqref{n_dnls-c}. It  can be verified by straightforward computation that map \eqref{n_dnls} satisfies also equations \eqref{DNLS-sys-a}, \eqref{DNLS-sys-b} and \eqref{DNLS-sys-c-2}.

The invariants are obvious due to equations \eqref{DNLS-sys-d} and \eqref{DNLS-sys-e}.
\end{proof}

Next, we prove that map \eqref{n_dnls} satisfies the  Yang--Baxter equation. Specifically, we have the following.

\begin{theorem}
    Map \eqref{n_dnls} is a Yang--Baxter map.
\end{theorem}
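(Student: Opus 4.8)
The plan is to apply the Kouloukas--Papageorgiou criterion (Proposition~\ref{KP}), exactly as was done for the noncommutative Adler--Yamilov map. That is, rather than verifying the Yang--Baxter equation~\eqref{eq_YB} directly, I would set up the matrix trifactorisation problem
\begin{equation*}
{\rm M}(\bm{u}_1,\bm{u}_2;a){\rm M}(\bm{v}_1,\bm{v}_2;b){\rm M}(\bm{w}_1,\bm{w}_2;c)={\rm M}(\bm{x}_1,\bm{x}_2;a){\rm M}(\bm{y}_1,\bm{y}_2;b){\rm M}(\bm{z}_1,\bm{z}_2;c),
\end{equation*}
with ${\rm M}$ the DNLS Darboux matrix~\eqref{Lax-DNLS}, and show that it forces $\bm{u}_i=\bm{x}_i$, $\bm{v}_i=\bm{y}_i$, $\bm{w}_i=\bm{z}_i$ for $i=1,2$. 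Since~\eqref{Lax-DNLS} is quadratic in $\lambda$, the product of three such matrices is degree six; collecting the coefficients of the powers $\lambda^6,\dots,\lambda^0$ in each entry yields the governing system of noncommutative polynomial equations, the analogue of~\eqref{alg-var}. The top and bottom powers should be automatically satisfied or give the immediate relations $\bm{u}_2\bm{v}_1=\bm{y}_2\bm{x}_1$-type constraints, just as in the Adler--Yamilov case.

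The core of the argument is then the elimination. First I would extract the \emph{linear} consequences of the system: note that map~\eqref{n_dnls} already carries the two additive first integrals $I_1=\bm{x}_1+\bm{y}_1$ and $I_2=\bm{x}_2+\bm{y}_2$ from~\eqref{1st_int-DNLS}, which descend from equations~\eqref{DNLS-sys-d} and~\eqref{DNLS-sys-e}; the trifactorisation analogues of these should give relations of the form $\bm{u}_1+\bm{v}_1+\bm{w}_1=\bm{x}_1+\bm{y}_1+\bm{z}_1$ and similarly for the second components, together with multiplicative relations such as $\bm{u}_2\bm{v}_1=\bm{y}_2\bm{x}_1$. I would use these to express the unknowns $\bm{v}_1,\bm{v}_2,\bm{w}_1,\bm{w}_2$ in terms of $\bm{u}_1,\bm{u}_2$ (and the data $\bm{x}_i,\bm{y}_i,\bm{z}_i$) by the same pattern of left/right multiplication by inverses used to derive~\eqref{v2},~\eqref{v1} and~\eqref{w2}. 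The decisive step is to feed these expressions into one remaining equation and isolate a factored identity of the form
\begin{equation*}
(\bm{u}_1-\bm{x}_1)\,\Phi=0,
\end{equation*}
where $\Phi$ is a sum built from $a,b,c$ and the initial data that is invertible (nonzero in the division ring $\mathfrak{R}$); since $\mathfrak{R}$ has no zero divisors this forces $\bm{u}_1=\bm{x}_1$, and then back-substitution collapses everything to $\bm{v}_i=\bm{y}_i$, $\bm{w}_i=\bm{z}_i$. Invoking Proposition~\ref{KP} concludes the proof.

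The main obstacle I anticipate is bookkeeping in the noncommutative elimination: because $\mathfrak{R}$ is a division ring one must scrupulously preserve the left/right order of every factor, multiply by inverses only on the correct side, and never cancel across a product unless the relevant element is genuinely central (only $a,b,c\in Z(\mathfrak{R})$ may be moved freely). The delicate point is exhibiting that the cofactor $\Phi$ in the factored equation is nonzero; this is where one must be careful, as in the commutative case the corresponding quantity is manifestly nonzero but here one needs a genuine argument—typically that $\Phi$ reduces to $c-b+(\text{terms})$ or to a shifted spectral-type expression that cannot vanish for generic central parameters $a,b,c$, mirroring the role played by the factor $c-b+\hat{\bm{x}}_1\bm{x}_1-\cdots$ in the KdV-lift proof~\eqref{eq-z2}. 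Apart from this, the computation is a lengthy but routine reduction entirely parallel to the Adler--Yamilov theorem.
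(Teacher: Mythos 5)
Your proposal is correct and follows essentially the same route as the paper: set up the matrix trifactorisation problem for the DNLS Darboux matrix \eqref{Lax-DNLS}, reduce it to a system of noncommutative polynomial equations, eliminate using the additive relations and the division-ring structure until a factored identity with an invertible cofactor forces the trivial solution, and then invoke Proposition \ref{KP}. The only (immaterial) difference is the order of elimination: the paper first obtains $\bm{w}_2=\bm{z}_2$ and $\bm{w}_1=\bm{z}_1$ from factored identities and recovers $\bm{u}_1=\bm{x}_1$ last from the linear equation, rather than isolating $(\bm{u}_1-\bm{x}_1)\Phi=0$ first as you suggest.
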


\begin{proof}
    For matrix \eqref{DT-NLS} we consider the following  
     $$
    {\rm M}(\bm{u}_1,\bm{u}_2;a){\rm M}(\bm{v}_1,\bm{v}_2;b){\rm M}(\bm{z}_1,\bm{z}_2;c)={\rm M}(\bm{z}_1,\bm{x}_2;a){\rm M}(\bm{y}_1,\bm{y}_2;b){\rm M}(\bm{z}_1,\bm{z}_2;c).
    $$

    The above matrix trifactorisation problem is equivalent to the following system of polynomial equations:\allowdisplaybreaks
    \begin{subequations}\label{alg-var-dnls}
    \begin{align}
       & ac\bm{v}_1\bm{v}_2 + bc\bm{u}_1\bm{u}_2 + ab\bm{w}_1\bm{w}_2 + a\bm{v}_1\bm{v}_2\bm{w}_1\bm{w}_2 + b\bm{u}_1\bm{u}_2\bm{w}_1\bm{w}_2 + c\bm{u}_1\bm{u}_2\bm{v}_1\bm{v}_2 + \bm{u}_1\bm{u}_2\bm{v}_1\bm{v}_2\bm{w}_1\bm{w}_2 =\nonumber\\ &ac\bm{y}_1\bm{y}_2 + bc\bm{x}_1\bm{x}_2 + ab\bm{z}_1\bm{z}_2 + a\bm{y}_1\bm{y}_2\bm{z}_1\bm{z}_2 + b\bm{x}_1\bm{x}_2\bm{z}_1\bm{z}_2 + c\bm{x}_1\bm{x}_2\bm{y}_1\bm{y}_2 + \bm{x}_1\bm{x}_2\bm{y}_1\bm{y}_2\bm{z}_1\bm{z}_2,\label{alg-var-dnls-a}\\
       &a\bm{v}_1\bm{v}_2 + b\bm{u}_1\bm{u}_2 + c\bm{v}_1\bm{v}_2 + c\bm{u}_1\bm{u}_2 + \bm{u}_1\bm{u}_2\bm{v}_1\bm{v}_2 + c\bm{u}_1\bm{v}_2 + b\bm{w}_1\bm{w}_2 + \bm{v}_1\bm{v}_2\bm{w}_1\bm{w}_2 + \bm{u}_1\bm{u}_2\bm{w}_1\bm{w}_2 +\nonumber\\ 
       &\bm{u}_1\bm{v}_2\bm{w}_1\bm{w}_2 + a\bm{v}_1\bm{w}_2 + \bm{u}_1\bm{u}_2\bm{v}_1\bm{w}_2 + a\bm{w}_1\bm{w}_2 = a\bm{y}_1\bm{y}_2 + b\bm{x}_1\bm{x}_2 + c\bm{y}_1\bm{y}_2 + c\bm{x}_1\bm{x}_2 + \bm{x}_1\bm{x}_2\bm{y}_1\bm{y}_2 + \nonumber\\
       &c\bm{x}_1\bm{y}_2 + b\bm{z}_1\bm{z}_2 + \bm{y}_1\bm{y}_2\bm{z}_1\bm{z}_2 + \bm{x}_1\bm{x}_2\bm{z}_1\bm{z}_2 + \bm{x}_1\bm{y}_2\bm{z}_1\bm{z}_2 + a\bm{y}_1\bm{z}_2 + \bm{x}_1\bm{x}_2\bm{y}_1\bm{z}_2 + a\bm{z}_1\bm{z}_2,\label{alg-var-dnls-b}\\
       &\bm{v}_1\bm{v}_2 + \bm{u}_1\bm{u}_2 + \bm{u}_1\bm{v}_2 + \bm{w}_1\bm{w}_2 + \bm{v}_1\bm{w}_2 + \bm{u}_1\bm{w}_2 = \bm{y}_1\bm{y}_2 + \bm{x}_1\bm{x}_2 + \bm{x}_1\bm{y}_2 + \bm{z}_1\bm{z}_2 + \bm{y}_1\bm{z}_2 + \bm{x}_1\bm{z}_2,\label{alg-var-dnls-c}\\
       &ab\bm{w}_1 + a\bm{v}_1\bm{v}_2\bm{w}_1 + b\bm{u}_1\bm{u}_2\bm{w}_1 + \bm{u}_1\bm{u}_2\bm{v}_1\bm{v}_2\bm{w}_1 = ab\bm{z}_1 + a\bm{y}_1\bm{y}_2\bm{z}_1 + b\bm{x}_1\bm{x}_2\bm{z}_1 + \bm{x}_1\bm{x}_2\bm{y}_1\bm{y}_2\bm{z}_1,\label{alg-var-dnls-d}\\
       &b\bm{w}_1 + \bm{v}_1\bm{v}_2\bm{w}_1 + \bm{u}_1\bm{v}_2\bm{w}_1 + a\bm{v}_1 + \bm{u}_1\bm{u}_2\bm{v}_1 + a\bm{w}_1 + \bm{u}_1\bm{u}_2\bm{w}_1 =\nonumber\\
       & b\bm{z}_1 + \bm{y}_1\bm{y}_2\bm{z}_1 + \bm{x}_1\bm{y}_2\bm{z}_1 + a\bm{y}_1 + \bm{x}_1\bm{x}_2\bm{y}_1 + a\bm{z}_1 + \bm{x}_1\bm{x}_2\bm{z}_1,\label{alg-var-dnls-e}\\
       &\bm{w}_1 + \bm{v}_1 + \bm{u}_1 = \bm{z}_1 + \bm{y}_1 + \bm{x}_1,\label{alg-var-dnls-f}\\
       &cb\bm{u}_2 + b\bm{u}_2\bm{w}_1\bm{w}_2 + c\bm{u}_2\bm{v}_1\bm{v}_2 + \bm{u}_2\bm{v}_1\bm{v}_2\bm{w}_1\bm{w}_2 = cb\bm{x}_2 + b\bm{x}_2\bm{z}_1\bm{z}_2 + c\bm{x}_2\bm{y}_1\bm{y}_2 + \bm{x}_2\bm{y}_1\bm{y}_2\bm{z}_1\bm{z}_2,\label{alg-var-dnls-g}\\
       &c\bm{u}_2 + \bm{u}_2\bm{w}_1\bm{w}_2 + \bm{u}_2b + \bm{u}_2\bm{v}_1\bm{v}_2 + \bm{u}_2\bm{v}_1\bm{w}_2 + c\bm{v}_2 + \bm{v}_2\bm{w}_1\bm{w}_2 = \nonumber\\
       &c\bm{x}_2 + \bm{x}_2\bm{z}_1\bm{z}_2 + b\bm{x}_2 + \bm{x}_2\bm{y}_1\bm{y}_2 + \bm{x}_2\bm{y}_1\bm{z}_2 + \bm{y}_2\bm{z}_1\bm{z}_2 + c\bm{y}_2,\label{alg-var-dnls-h}\\
       &\bm{u}_2 + \bm{w}_2 + \bm{v}_2 = \bm{x}_2 + \bm{z}_2 + \bm{y}_2,\label{alg-var-dnls-i}\\
       &b\bm{u}_2\bm{w}_1 + \bm{u}_2\bm{v}_1\bm{v}_2\bm{w}_1 = b\bm{x}_2\bm{z}_1 + \bm{x}_2\bm{y}_1\bm{y}_2\bm{z}_1,\label{alg-var-dnls-j}\\
       &\bm{u}_2\bm{w}_1 + \bm{v}_2\bm{w}_1 + \bm{u}_2\bm{v}_1 = \bm{x}_2\bm{z}_1 + \bm{y}_2\bm{z}_1 + \bm{x}_2\bm{y}_1.\label{alg-var-dnls-k}
    \end{align}  
    \end{subequations}

    For simplicity, we introduce the notation $\bm{u}_2\bm{v}_1 \rightarrow Q$,\quad $\bm{u}_2\bm{w}_1 \rightarrow S$,\quad $\bm{v}_2\bm{w}_1 \rightarrow F$, and we rewrite  equations \eqref{alg-var-dnls-k}, \eqref{alg-var-dnls-j}, \eqref{alg-var-dnls-h} and \eqref{alg-var-dnls-g}, namely:
    \begin{align}
       & F + Q + S - \bm{x}_2\bm{y}_1 - \bm{x}_2\bm{z}_1 - \bm{y}_2\bm{z}_1 = 0 ,\label{eq-1}\\
      &  QF + bS - b\bm{x}_2\bm{z}_1 - \bm{x}_2\bm{y}_1\bm{y}_2\bm{z}_1 = 0,\label{eq-2}\\
      &  b\bm{u}_2 + c(\bm{u}_2 + \bm{v}_2) + Q\bm{v}_2 + (F+Q+S)\bm{w}_2 - b\bm{x}_2 - c\bm{x}_2 - c\bm{y}_2 - \bm{x}_2\bm{y}_1\bm{y}_2 - \bm{x}_2\bm{y}_1\bm{z}_2 - \bm{x}_2\bm{z}_1\bm{z}_2 - \bm{y}_2\bm{z}_1\bm{z}_2 = 0,\label{eq-3}\\
      &  bc\bm{u}_2 + cQ\bm{v}_2 + (QF + bS)\bm{w}_2 - bc\bm{x}_2 - c\bm{x}_2\bm{y}_1\bm{y}_2 - b\bm{x}_2\bm{z}_1\bm{z}_2 - \bm{x}_2\bm{y}_1\bm{y}_2\bm{z}_1\bm{z}_2 = 0.\label{eq-4}
    \end{align}

    Solving \eqref{alg-var-dnls-i} for $\bm{u}_2+\bm{v}_2$, \eqref{eq-1} for $F + Q + S$ and substituting into \eqref{eq-3}, we obtain:
    \begin{equation}\label{4_dnls}
	b\bm{u}_2 + Q\bm{v}_2 + c(-\bm{w}_2 + \bm{z}_2) + (\bm{x}_2\bm{y}_1 + \bm{x}_2\bm{z}_1 + \bm{y}_2\bm{z}_1)\bm{w}_2 - b\bm{x}_2 - \bm{x}_2\bm{y}_1\bm{y}_2 - \bm{x}_2\bm{y}_1\bm{z}_2 - \bm{x}_2\bm{z}_1\bm{z}_2 - \bm{y}_2\bm{z}_1\bm{z}_2 = 0.
	\end{equation}

 From \eqref{4_dnls} and \eqref{alg-var-dnls-i} we obtain:
 \begin{equation}\label{w2_dnls}
	\bm{w}_2 = \bm{z}_2 + (-c + \bm{x}_2\bm{y}_1 + \bm{x}_2\bm{z}_1 + \bm{y}_2\bm{z}_1)^{-1}(-b\bm{u}_2 + Q(-\bm{x}_2 - \bm{y}_2 - \bm{z}_2 + \bm{u}_2 + \bm{w}_2) + b \bm{x}_2 + \bm{x}_2\bm{y}_1\bm{y}_2).
	\end{equation}

 Substituting to \eqref{eq-4} $QF+bS=b\bm{x}_2\bm{z}_1 + \bm{x}_2\bm{y}_1\bm{y}_2\bm{z}_1$ from \eqref{eq-2}, and  then $\bm{v}_2$ from \eqref{alg-var-dnls-j}, we obtain:
\begin{equation}\label{u2_dnls}
\bm{u}_2 = (cb-cQ)^{-1} (cQ(\bm{w}_2-\bm{x}_2-\bm{y}_2-\bm{z}_2) - b\bm{x}_2\bm{z}_1\bm{w}_2 - \bm{x}_2\bm{y}_1\bm{y}_2\bm{z}_1\bm{w}_2  +bc\bm{x}_2 + c\bm{x}_2\bm{y}_1\bm{y}_2 + b\bm{x}_2\bm{z}_1\bm{z}_2 + \bm{x}_2\bm{y}_1\bm{y}_2\bm{z}_1\bm{z}_2).
\end{equation}
Now, substitution of  \eqref{u2_dnls} to \eqref{w2_dnls} implies:
$$\left[1+(-c + \bm{x}_2\bm{y}_1 + \bm{x}_2\bm{z}_1 + \bm{y}_2\bm{z}_1)^{-1}\left(-\frac{b}{c} \bm{x}_2\bm{z}_1 - \frac{1}{c}\bm{x}_2\bm{y}_1\bm{y}_2\bm{z}_1\right)\right](\bm{z}_2-\bm{w}_2) = 0, $$
from where it follows that $\bm{w}_2=\bm{z}_2$.

Next,  from \eqref{4_dnls}, given that $\bm{w}_2=\bm{z}_2$, we obtain $\bm{u}_2\bm{v}_1\bm{v}_2 = -b\bm{u}_2 + b\bm{x}_2 + \bm{x}_2\bm{y}_1\bm{y}_2$. We substitute the latter into  \eqref{alg-var-dnls-j} to  obtain:
$$
(b\bm{x}_2 + \bm{x}_2\bm{y}_1\bm{y}_2)(\bm{w}_1-\bm{z}_1)=0,
$$
from where it follows that $\bm{w}_1=\bm{z}_1$.

Now, we substitute $\bm{w}_2=\bm{z}_2$ to \eqref{alg-var-dnls-i} to  obtain: $\bm{v}_2+\bm{u}_2=\bm{y}_2+\bm{x}_2$. Substitution of the latter together  with $\bm{w}_1=\bm{z}_1$ to \eqref{alg-var-dnls-k} implies: $\bm{u}_2\bm{v}_1=\bm{x}_2\bm{y}_1$. Then,  from \eqref{u2_dnls} follows
$\bm{u}_2 = (b-\bm{x}_2\bm{y}_1)^{-1}(b-\bm{x}_2\bm{y}_1)\bm{x}_2=\bm{x}_2$.

Finally, from \eqref{alg-var-dnls-i}, \eqref{alg-var-dnls-k}  and \eqref{alg-var-dnls-f}, we obtain $\bm{v}_2 = \bm{y}_2$, $\bm{v}_1=\bm{y}_1$ and $\bm{u}_1=\bm{x}_1$, respectively.

That  is, we showed that system \eqref{alg-var-dnls} implies
$$\bm{u}_1=\bm{x}_1, \bm{u}_2=\bm{x}_2,\bm{v}_1=\bm{y}_1,\bm{v}_2=\bm{y}_2,\bm{w}_1=\bm{z}_1,\bm{w}_2=\bm{z}_2.$$
Thus, according to Proposition \ref{KP}, map \eqref{n_dnls} is a  Yang--Baxter map.
\end{proof}

\section{Conclusions}\label{conclusions}
In this paper, we derived new solutions to the famous Yang--Baxter equation, and also constructed noncommutative versions of well-known Yang--Baxter  maps. In particular, using the approach of correspondence \cite{Igonin-Sokor}, we constructed new Yang--Baxter maps of  KdV  type, namely maps \eqref{KdV-1}  and \eqref{KdV-2}. We showed that map  \eqref{KdV-1} is completely integrable in the Liouville sense. Then, we constructed a fully noncommutative version of the ``KdV lift'' \cite{Kouloukas} which we squeezed down to the noncommutative discrete potential KdV equation.

Moreover, we constructed a noncommutative versions of certain Darboux transformations for the derivative NLS  equation \cite{SPS}, namely transformations \eqref{DT-DNLS-1}--\eqref{BT-DNLS-1} and \eqref{DT-DNLS-2}--\eqref{BT-DNLS-2}. Then,  using a first  integral of the  system \eqref{BT-DNLS-1}, we derived a parametric  family of Darboux matrices for the  noncommutative  derivative NLS  equation, namely Darboux matrices \eqref{DT-DNLS-1-pq}.

Finally, we employed a Darboux matrix for the  noncommutative NLS equation \cite{SP} and matrix \eqref{DT-DNLS-1-pq}  to contruct noncommutative maps (maps with  their  variables belonging to a division ring) of NLS and  derivative NLS type, namely maps \eqref{n_nls} and \eqref{n_dnls}, which  we showed  that are Yang--Baxter maps.

Our results can be extended in the following  ways:
\begin{itemize}
    \item Study  the Liouville integrability of maps \eqref{KdV-YB-NC}, \eqref{n_nls} and \eqref{n_dnls}. The amount of invariants (first integrals) of map \eqref{n_dnls} is enough  for complete  integrability  claims. However, for maps \eqref{KdV-YB-NC} and \eqref{n_nls}, one must  first find  one more  first integral. 

    \item Matrices \eqref{Lax-NLS} and \eqref{Lax-DNLS} come from more general Darboux transformations after using first integrals. Similarly to the  commutative case \cite{Sokor-Sasha}, one  could employ the more general  Darboux matrices in order to derive noncommutative six-dimensional  Yang--Baxter maps  which can be restricted on  certain invariant leaves to the noncommutative four-dimensional Yang--Baxter maps \eqref{n_nls} and \eqref{n_dnls}.

    \item The elements of Yang--Baxter maps \eqref{n_nls} and \eqref{n_dnls} are solutions to the NLS and the derivative DNLS  equation.  Maps \eqref{n_nls} and \eqref{n_dnls} may  be  certain B\"acklund  transformations for the  NLS and the derivative DNLS  equation, respectively.

   \item Study the linearised version of map \eqref{n_dnls}. The linearisation procedure was shown for the derivative NLS map in the  commutative case  in \cite{BIKRP}.

    \item Employ Darboux matrices of  NLS type to construct noncommutative (on  division rings) versions of the NLS entwining Yang--Baxter maps  which appeared in  \cite{Sokor-Pap}. Note that Grassmann extended versions of these maps already appeard  in \cite{Giota-Miky}  and are the first  examples of Grassmann  extended entwining Yang--Baxter maps in the literature.

    \item The Yang--Baxter equation is the second member of a  wider family of equations, the $n$-simplex equations. Employ the matrices considered in this paper to construct noncommutative solutions for higher members in the family of $n$-simplex equations. A noncommutative NLS  type  tetrahedron map already  appeared in  \cite{Sokor-2022}.

    \item In this paper we used the method presented in \cite{Kouloukas} in order to associate the our noncommutative KdV type of Yang--Baxter map to the  noncommutative discrete potential KdV equation. One could use the  symmetry approach \cite{Pap-Tongas-Veselov, Kassotakis-Tetrahedron}, or the method of integrals in separated variables \cite{Pavlos-Maciej-2} to relate the  KdV maps to other noncommutative integrable lattice equations.
\end{itemize}

\section*{Acknowledgements}
This work was funded by the Russian Science Foundation project No. 20-71-10110 (https://rscf.ru/en/project/23-71-50012/).
Part of this  work,  namely the proofs of Theorems 3.3, 4.4 and 4.6, was carried out within the framework of a development programme for the Regional Scientific and Educational Mathematical Centre of the P.G. Demidov Yaroslavl State University with financial support from the Ministry of Science and Higher Education of the Russian Federation (Agreement on provision of subsidy from the federal budget No. 075-02-2023-948).

\end{document}